\newcommand{\mypar}[1]{\vspace{0.03in}\noindent{\bf #1.}}
\newcounter{brojac}
\newtheorem{assumption}[brojac]{Assumption}
\newtheorem{theorem}{Theorem}
\newtheorem{lemma}[theorem]{Lemma}
\newtheorem{assumptions}[theorem]{Assumptions}
\newtheorem{corollary}[theorem]{Corollary}
\newcommand{\E}{\mathbb E}
\title{Large Deviations Performance of Consensus+Innovations Distributed Detection with Non-Gaussian Observations}
\author{Dragana Bajovi\'c, Du$\breve{\mbox{s}}$an Jakoveti\'c, Jos\'e M.~F.~Moura, Jo\~ao
Xavier, and Bruno Sinopoli
\thanks{The work of the first, second, fourth, and fifth authors is partially supported by grants CMU-PT/SIA/0026/2009,
SFRH/BD/33517/2008 (through the Carnegie Mellon/Portugal Program
managed by ICTI) and by grant PTDC/EEA-CRO/104243/2008 from Funda\c{c}\~{a}o para a Ci\^encia e Tecnologia and
also by
ISR/IST plurianual funding (POSC program, FEDER). The work of the third author is partially supported by NSF under grants CCF-1011903 and CCF-1018509, and by AFOSR grant
FA95501010291. Dragana Bajovi\'{c} and Du$\breve{\mbox{s}}$an Jakoveti\'c hold fellowships
from the Carnegie Mellon/Portugal Program.}
\thanks{Dragana Bajovi\'{c} and Du$\breve{\mbox{s}}$an Jakoveti\'c are with the
Institute for Systems and Robotics
(ISR), Instituto Superior T\'{e}cnico (IST), Technical University of Lisbon, Lisbon, Portugal, and with
the Department of Electrical and Computer Engineering, Carnegie Mellon
University, Pittsburgh, PA, USA, {\tt\small dbajovic@andrew.cmu.edu,
djakovet@andrew.cmu.edu}}%
\thanks{Jos\'e M.~F.~Moura and Bruno Sinopoli are with the Department of
Electrical and Computer
Engineering, Carnegie Mellon University, Pittsburgh, 15213 PA, USA, {\tt\small
moura@ece.cmu.edu, brunos@ece.cmu.edu}}%
\thanks{Jo\~ao Xavier is with the Institute for Systems and Robotics (ISR),
Instituto Superior T\'{e}cnico (IST), Technical University of Lisbon, Lisbon, Portugal, {\tt\small
jxavier@isr.ist.utl.pt}}
}%
\begin{document}

\maketitle
\vspace{-9mm}
\begin{abstract}\small
We establish the large deviations asymptotic performance (error exponent) of consensus+innovations distributed detection
 over random networks with \emph{generic} (non-Gaussian)
  sensor observations. At each time instant, sensors 1)~combine theirs with the decision variables of their neighbors (consensus)
  and 2)~assimilate their new observations (innovations).
    This paper shows for general non-Gaussian distributions that consensus+innovations distributed detection exhibits a
   phase transition behavior with respect to the network degree of connectivity. Above a threshold,
      distributed is as good as centralized, with the same optimal asymptotic detection performance, but, below
     the threshold,  distributed detection is suboptimal with respect to
      centralized detection.  We determine this threshold and quantify the performance loss below threshold.
      Finally, we show the dependence of the threshold and performance on the distribution
       of the observations: distributed detectors over the same random network, but with
       different observations' distributions, for example, Gaussian,  Laplace, or quantized, may have different asymptotic performance, even when the corresponding centralized detectors have the same asymptotic performance.
\end{abstract}
\hspace{-1.43cm}\textbf{Keywords:} {\small Consensus+innovations, performance analysis, Chernoff information, non-Gaussian distributions, distributed detection,
random network, information flow, large deviations.}
\newpage
\section{Introduction}
\renewcommand{\baselinestretch}{1.2}
\label{Intro}
Consider a distributed detection scenario where $N$ sensors are connected by a generic network with intermittently failing links.  The sensors perform consensus+innovations distributed detection; in other words,  at each time $k$, each sensor $i$ updates its local decision variable $x_i(k)$ by: 1)~sensing and processing a new measurement to create an intermediate variable; and
2)~weight averaging it with its neighbors' intermediate decision variables. We showed in~\cite{GaussianDD} that, when the sensor observations are Gaussian, the consensus+innovations distributed detector exhibits a phase transition. When the network connectivity  
 is above a threshold, then the distributed detector is asymptotically optimal, i.e., asymptotically equivalent to the optimal centralized detector that collects the observations of all sensors.

  This paper establishes the asymptotic performance of distributed detection over random networks for \emph{generic, non-Gaussian} sensor observations. We adopt as asymptotic performance measure the
   exponential decay rate of the Bayes error probability (error exponent). We show that phase transition behavior emerges with non-Gaussian observations and demonstrate how the optimality threshold is a function of the log-moment generating function of the sensors' observations and of the number of sensors $N$. This reveals 
   %
   %
 a very interesting interplay  between the distribution of the sensor observations (e.g., Gaussian or Laplace) and the rate of diffusion (or connectivity) of the network 
(measured by a parameter $|\log r| \in [0,\infty)$ defined in Section~\ref{sec-Problem-formulation}): for a network with the same connectivity, a distributed detector with say, Laplace observations distributions, may match the optimal asymptotic performance of the centralized detector, while the distributed detector for Gaussian observations may be suboptimal, even though the centralized detectors for the two distributions, Laplace and Gaussian, have the same  optimal asymptotic performance.

For distributed detection, we determine the range on the detection threshold $\gamma$ for which
 each sensor achieves exponentially fast decay of the error
 probability (strictly positive error exponent), and
  we find the optimal $\gamma$ that maximizes the error exponent.
  Interestingly, above the critical (phase transition) value for
  the network connectivity $|\log r|$, the optimal detector threshold is $\gamma=0$,
  mimicking the (asymptotically) optimal threshold for the centralized detector.
   However, below the critical connectivity, we show by a
   numerical example that the optimal distributed detector threshold might be non zero.

\mypar{Brief review of the literature}
Distributed detection has been extensively studied, in the context of parallel fusion architectures, e.g., \cite{Moura-saddle-point,Varshney-I,Poor-II,Veraavali-SPMag,Veraavali,Veraavali-LDP,sensor-selection},
 consensus-based detection~\cite{Moura-detection-consensus,consensus-detection,aldosarimouramay06,MouraInference},
  and, more recently, consensus+innovations distributed inference,
see, e.g., \cite{Sayed-LMS,SoummyaEst,Giannakis-LMS,Giannakis-LMS-2,Sayed-LMS-new} for distributed estimation, and \cite{Sayed-detection,Sayed-detection-2,running-consensus-detection,cattivellisayed-2011,Soummya-Detection-Noise,stankovic-change-detection,stankovic-conference} for distributed detection.
Different variants of consensus+innovations distributed detection algorithms have been proposed; we analyze here
  running consensus, the variant in~\cite{running-consensus-detection}.

 Reference~\cite{running-consensus-detection} considers asymptotic optimality of running consensus, but in a framework that is very different from ours. Reference~\cite{running-consensus-detection} studies the asymptotic performance of the distributed detector where the means of the sensor observations under the two hypothesis become closer and closer (vanishing signal to noise ratio (SNR)), at the rate of $1/\sqrt k$, where~$k$ is the number of observations. For this problem, there is an asymptotic, non-zero, probability of miss and an asymptotic, non-zero, probability of false alarm. Under these conditions, running consensus is as efficient as the optimal centralized detector, \cite{Kassam}, as long as the network is connected on average. Here, we assume that the means of the distributions stay fixed as~$k$ grows. We establish, through large deviations,   the rate (error exponent) at which the error probability decays to zero as~$k$ goes to infinity. We show that connectedness on average is not sufficient for running consensus to achieve the optimality of centralized detection; rather, phase change occurs, with distributed becoming as good as centralized, when the network connectivity, measured by $|\log r|$, exceeds a certain threshold.

 We distinguish this paper from our prior work on the performance analysis of running consensus. In~\cite{allerton}, we studied \emph{deterministically} time varying networks and \emph{Gaussian} observations, and in~\cite{DusanNoise},
 we considered a different consensus+innovations detector with Gaussian observations and additive communication noise. Here, we consider \emph{random} networks, \emph{non-Gaussian} observations, and noiseless communications. Reference~\cite{GaussianDD} considers \emph{random} networks and \emph{Gaussian}, spatially correlated observations. In contrast, here the observations are \emph{non-Gaussian} spatially independent. We proved our results in~\cite{GaussianDD} by using the \emph{quadratic} nature of the Gaussian log-moment generating function. For general \textit{non-Gaussian} observations, the log-moment generating function is no longer quadratic, and the arguments in~\cite{GaussianDD} no longer apply; we develop a more general methodology that establishes the optimality threshold in terms of the log-moment generating function of the log-likelihood ratio.  We derive our results from \emph{generic} properties of the log-moment generating function like \emph{convexity} and \emph{zero value at the origin}. Finally, while reference~\cite{GaussianDD} and our
 other prior work considered zero detection threshold $\gamma=0$, here we
 extend the results for generic detection thresholds $\gamma$. Our analysis reveals
 that, when $|\log r|$ is above its critical value, the zero detector threshold $\gamma=0$
  is (asymptotically) optimal. When $|\log r|$ is below the critical value,
  we compute the best detector threshold $\gamma = \gamma^\star$, which may be non-zero in general.

 Our analysis shows the impact of the distribution of the sensor observations on the performance of distributed detection:
  distributed detectors (with different distributions of the  sensors observations) can have different asymptotic performance, even though the corresponding centralized detectors are equivalent, as we will illustrate in detail in Section~\ref{sec-Examples}.

\mypar{Paper outline} Section~\ref{sec-Problem-formulation} introduces the network and sensor observations models and presents the consensus+innovations distributed detector.
Section~\ref{Main-result} presents and proves our main results on the asymptotic performance of the distributed detector. For a cleaner exposition, this section proves the results for (spatially) identically distributed sensor observations. Section~\ref{sec-Examples} illustrates our results on several types of
sensor observation distributions, namely, Gaussian, Laplace, and discrete valued distributions,
discussing the impact of these distributions on distributed detection performance. Section~\ref{section-extensions} extends our main results to non-identically distributed sensors' observations. Finally, Section~\ref{section-conclusion}  concludes the paper.

\mypar{Notation}
We denote by: $A_{ij}$ the $(i,j)$-th entry of a matrix $A$; $a_i$ the $i$-th entry of a vector $a$; $I$, $1$, and $e_i$, respectively, the identity matrix, the column vector with unit entries, and the $i$-th column of $I$; $J$ the $N \times N$ ideal consensus matrix $J:=(1/N)11^\top$; $\| \cdot \|_l$ the vector (respectively, matrix) $l$-norm of its vector (respectively, matrix) argument; $\|\cdot\|=\|\cdot\|_2$ the Euclidean (respectively, spectral) norm of its vector
(respectively, matrix) argument; $\mu_i(\cdot)$ the $i$-th largest eigenvalue; $\mathbb E \left[ \cdot \right]$ and $\mathbb P \left( \cdot\right)$ the expected value and probability operators, respectively; $\mathcal{I}_{\mathcal{A}}$ the indicator function of the event $\mathcal{A}$; $\nu^N$ the product measure of $N$ i.i.d. observations drawn from the distribution with measure $\nu$; $h^\prime(z)$ and $h^{\prime \prime}(z)$ the first and the second derivatives of the function $h$ at point $z$.

\section{Problem formulation}
\label{sec-Problem-formulation}
This section introduces the sensor observations model,
 reviews the optimal centralized detector, and presents
 the consensus+innovations distributed detector. The
 section also reviews
 relevant properties of the log-moment generating function of a sensor's log-likelihood ratio that are needed in the sequel.
\subsection{Sensor observations model}
\label{subsec-Problem-model}
We study the binary hypothesis testing problem $H_1$ versus $H_0$. We consider
 a network of $N$ nodes where $Y_i(t)$ is the observation of sensor $i$ at time $t$, where $i=1,\ldots,N$, $t=1,2,\ldots$

\begin{assumption}
\label{ass_1}
The sensors' observations $\left\{ Y_i(t) \right\}$ are independent and identically distributed (i.i.d.) both in time and in space, with distribution $\nu_1$ under hypothesis $H_1$ and $\nu_0$ under $H_0$:
\begin{equation}
Y_i(t)\,\sim\,\left\{ \begin{array}{lr}   \nu_1, \;\;\;\; H_1\\  \nu_0, \;\;\;\; H_0 \end{array} \right.\:\:, i=1,\ldots,N,\: t=1,2,\ldots
\end{equation}
The distributions $\nu_1$ and $\nu_0$ are mutually absolutely continuous, distinguishable measures. The prior probabilities
$\pi_1 = \mathbb{P}(H_1)$ and $\pi_0 = \mathbb{P}(H_0)=1-\pi_1$ are in $(0,1)$.
\end{assumption}
By spatial independence, the joint distribution of the observations of all sensors
\begin{equation}
\label{eqn-Y-t}
Y(t):=\left(Y_1(t),\ldots, Y_N(t)\right)^\top
 \end{equation}
 at any time $t$ is $\nu_1^N$ under $H_1$ and $\nu_0^N$ under $H_0$. Our main results in Section~{III}
 are derived under Assumption~\ref{ass_1}. Section~V extends them to non-identical (but still independent) sensors' observations.
\subsection{Centralized detection, log-moment generating function (LMGF), and optimal error exponent}
\label{subsec-Centralized-detection}
%
The log-likelihood ratio of sensor $i$ at time $t$ is $L_i(t)$ and given by
\[
L_i(t)=\log \frac{f_1\left( Y_i(t)\right)} {f_0\left( Y_i(t)\right)},
\]
where, $f_l(\cdot)$, $l=0,1,$ is 1) the probability density function corresponding to $\nu_l$, when $Y_i(t)$ is an absolutely continuous random
variable; or 2) the probability mass function corresponding to $\nu_l$,
when $Y_i(t)$ is discrete valued.

%
Under Assumption~\ref{ass_1}, the log-likelihood ratio test for $k$ time observations from all sensors, for a threshold $\gamma$~is:
\footnote{In~\eqref{eq-centralized-dec-rule}, we re-scale
 the spatio-temporal sum of the log-likelihood ratios $L_i(t)$ by dividing
 the sum by $Nk$. Note that we can do so without loss of generality,
 as the alternative test without re-scaling is: $\sum_{t=1}^k \sum_{i=1}^N   L_i(t)  \stackrel[H_0]{H_1}{\gtrless}\gamma^\prime,$
  with $\gamma^\prime=N k \gamma.$}
\begin{equation}\label{eq-centralized-dec-rule}
D(k):=\frac{1}{Nk}\sum_{t=1}^k \sum_{i=1}^N   L_i(t)  \stackrel[H_0]{H_1}{\gtrless}\gamma.
\end{equation}

\mypar{Log-moment generating function (LMGF)}
We introduce the LMGF of $L_i(t)$ and its properties that
play a major role in assessing the performance of distributed detection.

Let $\Lambda_l$ ($l=0,1$) denote the LMGF for the log-likelihood ratio under hypothesis $H_l$:
\begin{equation}
\label{eq-log-momentgen-fcn}
\Lambda_l: \mathbb{R} \longrightarrow \left(-\infty,+\infty\right],\;\;\; \Lambda_l(\lambda)=\log \E \left[e^{\lambda \, L_1(1)}\,| \,H_l\right].
\end{equation}
In \eqref{eq-log-momentgen-fcn}, $L_1(1)$ replaces $L_i(t)$, for arbitrary $i=1,...,N$, and $t=1,2,...$, due
to the spatial and temporal identically distributed observations, see Assumption~\ref{ass_1}.
%
\begin{lemma}
\label{lemma-log-momentgen-fcns}
 Consider Assumption~\ref{ass_1}. For $\Lambda_0$ and $\Lambda_1$ in~\eqref{eq-log-momentgen-fcn} the following holds:
\begin{enumerate}[(a)]
\label{lemma-convexity-lmgf}
\item $\Lambda_0$ is convex;
\label{lemma-log-momentgen-fcns-part-a}
\item $\Lambda_0(\lambda)\in\left(-\infty,0\right)$, for $\lambda\in(0,1)$, $\Lambda_0(0)=\Lambda_0(1)=0$, 
and $\Lambda_l^\prime(0)=\mathbb{E} \left[ L_1(1) | H_l\right]$, $l=0,1$;
\label{lemma-log-momentgen-fcns-part-b}
\item
\label{lemma-log-momentgen-fcns-part-c}
$\Lambda_1(\lambda)$ satisfies:
\begin{equation}
\label{eq-counterpart-for-H1}
\Lambda_1(\lambda)=\Lambda_0(\lambda+1),\;\;\;\mathrm{for}\;\;\lambda \in\mathbb{R}.
\end{equation}
\end{enumerate}
\end{lemma}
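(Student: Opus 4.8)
The plan is to prove the three parts using only H\"older's and Jensen's inequalities, a single change-of-measure identity between $\nu_0$ and $\nu_1$, and differentiation under the expectation, all leaning on the mutual absolute continuity and distinguishability assumed in Assumption~\ref{ass_1}.

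For part~(a), I would establish convexity directly from H\"older's inequality. Fix $\lambda_1,\lambda_2\in\mathbb{R}$ and $\theta\in[0,1]$, and set $\lambda=\theta\lambda_1+(1-\theta)\lambda_2$. Writing $e^{\lambda L_1(1)}=\bigl(e^{\lambda_1 L_1(1)}\bigr)^{\theta}\bigl(e^{\lambda_2 L_1(1)}\bigr)^{1-\theta}$ and applying H\"older with conjugate exponents $1/\theta$ and $1/(1-\theta)$ gives $\E[e^{\lambda L_1(1)}\mid H_0]\le \bigl(\E[e^{\lambda_1 L_1(1)}\mid H_0]\bigr)^{\theta}\bigl(\E[e^{\lambda_2 L_1(1)}\mid H_0]\bigr)^{1-\theta}$; taking logarithms yields $\Lambda_0(\lambda)\le\theta\Lambda_0(\lambda_1)+(1-\theta)\Lambda_0(\lambda_2)$. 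The identical argument applies to $\Lambda_1$.

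For part~(c), the key observation is the change of measure $dP_1 = (f_1/f_0)\,dP_0 = e^{L_1(1)}\,dP_0$, valid because $\nu_0$ and $\nu_1$ are mutually absolutely continuous. Then $\Lambda_0(\lambda+1)=\log\E\bigl[e^{\lambda L_1(1)}\,e^{L_1(1)}\mid H_0\bigr]=\log\E\bigl[e^{\lambda L_1(1)}\mid H_1\bigr]=\Lambda_1(\lambda)$ for every $\lambda\in\mathbb{R}$, both sides being simultaneously finite or infinite. This also immediately delivers the endpoint values in part~(b): $\Lambda_0(1)=\Lambda_1(0)=\log\E[1\mid H_1]=0$, while $\Lambda_0(0)=\log 1=0$ is trivial. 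For the remaining claims in part~(b), on $(0,1)$ I would apply Jensen's inequality to the strictly concave map $x\mapsto x^{\lambda}$: $\E[(e^{L_1(1)})^{\lambda}\mid H_0]\le(\E[e^{L_1(1)}\mid H_0])^{\lambda}=1$, so $\Lambda_0(\lambda)\le 0$, with strict inequality because distinguishability forces $L_1(1)$ to be non-degenerate (if $L_1(1)\equiv c$ almost surely, then $f_1=e^{c}f_0$, and integrating forces $c=0$ and $\nu_1=\nu_0$, a contradiction); since $e^{\lambda L_1(1)}>0$ we also have $\Lambda_0(\lambda)>-\infty$, giving $\Lambda_0(\lambda)\in(-\infty,0)$. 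The derivative identity follows by differentiating under the expectation: $\Lambda_l'(\lambda)=\E[L_1(1)\,e^{\lambda L_1(1)}\mid H_l]/\E[e^{\lambda L_1(1)}\mid H_l]$, which at $\lambda=0$ equals $\E[L_1(1)\mid H_l]$.

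The main obstacle is this last identity: interchanging differentiation and expectation requires that $0$ lie in the interior of the effective domain $\{\lambda:\Lambda_l(\lambda)<\infty\}$, so that the moment generating function is finite on a two-sided neighborhood of the origin and hence real-analytic there, legitimizing the dominated-convergence differentiation. I would handle this by invoking the standard analyticity of moment generating functions on the interior of their domain of finiteness, or, should $0$ be only a boundary point, by phrasing the claim in terms of the one-sided derivative guaranteed to exist by convexity. The convexity and change-of-measure steps are otherwise entirely routine.
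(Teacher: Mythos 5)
Your proposal is correct: for part~(c) it is exactly the paper's change-of-measure computation $\Lambda_1(\lambda)=\log\int (f_1/f_0)^{1+\lambda}f_0 = \Lambda_0(1+\lambda)$, and for parts~(a) and~(b) -- which the paper does not prove but simply cites to a standard reference -- you supply the canonical H\"older and Jensen arguments (with the right use of strict concavity plus distinguishability to get $\Lambda_0<0$ on $(0,1)$). Your flagged caveat about $\Lambda_l^\prime(0)$ is well taken: Assumption~\ref{ass_1} only guarantees finiteness of $\Lambda_0$ on $[0,1]$ by convexity, so without Assumption~\ref{ass_finite_LMGF} the derivative at the origin should indeed be read as the one-sided derivative, which is what the convexity argument delivers.
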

\begin{proof} For a proof of~\eqref{lemma-log-momentgen-fcns-part-a} and~\eqref{lemma-log-momentgen-fcns-part-b}, see~\cite{Hollander}. Part~\eqref{lemma-log-momentgen-fcns-part-c} follows from the definitions of $\Lambda_0$~and~$\Lambda_1$, which we show here for the case when the distributions $\nu_1$ and $\nu_0$ are absolutely continuous (the proof for discrete distributions is similar):
\begin{align*}
\Lambda_1(\lambda)=\log \E \left[e^{\lambda L_1(1)}| H_1\right]&=\log\int_{y \in \mathbb R} \left(\frac{f_1(y)}
{f_0(y)}\right)^{\lambda} f_1(y) d y\\
&=\log\int_{y \in \mathbb R} \left(\frac{f_1(y)}{f_0(y)}\right)^{1+\lambda} f_0(y) d y=\Lambda_0(1+\lambda).
\end{align*}
\vspace{-3mm}
%
%
%
\end{proof}

We further assume that the LMGF of a sensor's observation is finite.
\begin{assumption}
\label{ass_finite_LMGF}
$\Lambda_0(\lambda)<+\infty$, $\forall \lambda \in {\mathbb R}$.
\end{assumption}
In the next two remarks, we
give two classes of problems when Assumption~\ref{ass_finite_LMGF} holds.

\mypar{Remark~{I}} We consider the  signal+noise model:
\begin{equation}
\label{eqn-noise+signal}
Y_i(t)\,=\,\left\{ \begin{array}{ll}   m+n_i(k), & H_1\\  n_i(k), & H_0. \end{array} \right.
\end{equation}
 Here $m \neq 0$ is a constant signal and $n_i(k)$ is a zero-mean additive noise with density function $f_n(\cdot)$ supported on $\mathbb R$; we rewrite
 $f_n(\cdot)$, without loss of generality, as $f_n(y) = c\,e^{-g(y)}$, where $c>0$ is a constant. Then,
 the Appendix shows that Assumption~\ref{ass_finite_LMGF} holds under the following mild technical condition: either one of
 \eqref{eqn-polynomial} or~\eqref{eqn-logarithmic} \textit{and} one of \eqref{eqn-polynomial-2} or~\eqref{eqn-logarithmic-2} hold:
 \begin{eqnarray}
 \label{eqn-polynomial}
 \lim_{y \rightarrow +\infty} \frac{g(y)}{|y|^{\tau_{+}}} &=& \rho_{+},\:\:\mathrm{for\,some\,\,}\rho_{+},\tau_{+} \in (0,+\infty)\\
  \label{eqn-logarithmic}
 \lim_{y \rightarrow +\infty} \frac{g(y)}{\left(\log (|y|)\right)^{\mu_{+}}} &=& \rho_{+},\:\:
 \mathrm{for\,some\,\,}\rho_{+} \in (0,+\infty),\:\mu_{+}\in(1,+\infty)\\
 \label{eqn-polynomial-2}
  \lim_{y \rightarrow -\infty} \frac{g(y)}{|y|^{\tau_{-}}} &=& \rho_{-},\:\:\
  \mathrm{for\,some\,\,}\rho_{-},\tau_{-} \in (-\infty,0)\\
\label{eqn-logarithmic-2}
  \lim_{y \rightarrow -\infty} \frac{g(y)}{\left( \log(|y|)\right)^{\mu_{-}}} &=& \rho_{-},
   \:\: \mathrm{for\,some\,\,}\rho_{-} \in (0,-\infty),\:\mu_{-}\in(1,+\infty).
 \end{eqnarray}
In~\eqref{eqn-logarithmic} and~\eqref{eqn-logarithmic-2}, we can also allow either (or both) $ \mu_{+}, \mu_{-}$ to equal 1,
but then the corresponding $\rho$ is in $(1,\infty)$. Note that $f_n(\cdot)$ need not be symmetric, i.e., $f_n(y)$ need not be equal to $f_n(-y)$. Intuitively, the tail of the density $f_n(\cdot)$ behaves regularly, and $g(y)$ grows either
like a polynomial of arbitrary finite order in $y$, or slower, like a power $y^{\tau}$, $\tau \in (0,1)$,
or like a logarithm $c (\log y)^\mu$. The class of admissible densities $f_n(\cdot)$ includes, e.g.,
power laws $c y^{-p}$, $p>1$, or the exponential families $e^{\theta \,\phi(y)-A(\theta)}$, $A(\theta): = \log \int_{y=-\infty}^{+\infty}e^{\theta \phi(y)} \chi(d y)$, with: 1) the Lebesgue base measure $\chi$;
2) the polynomial, power, or logarithmic potentials $\phi(\cdot)$;
 and 3) the canonical set of parameters $\theta \in \Theta  =
 \left\{\theta:\,\, A(\theta)<+\infty \right\}$,~\cite{weinwright}.

\mypar{Remark~{II}} Assumption~\ref{ass_finite_LMGF}
 is satisfied if $Y_i(k)$ has arbitrary (different) distributions under $H_1$ and $H_0$ with the same, compact support;
  a special case is when $Y_i(k)$ is discrete, supported on a finite alphabet.

\mypar{Centralized detection: Asymptotic performance}
We consider briefly the performance of the centralized detector that will benchmark the performance of the distributed detector.
Denote by $\gamma_l : = \mathbb{E}\left[ L_1(1)|H_l\right]$, $l=0,1.$ It can be shown~\cite{DemboZeitouni} that
 $\gamma_0<0$ and $\gamma_1>0$. Now, consider the centralized detector in~\eqref{eq-centralized-dec-rule}
  with constant thresholds $\gamma$, for all $k$, and denote
  by:
  \begin{equation}
  \label{eqn-centralized-alpha-beta-pe}
  \alpha(k,\gamma) = \mathbb{P}\left( D(k)\geq \gamma|H_0\right),
  \:\beta(k,\gamma) = \mathbb{P}\left( D(k)< \gamma|H_1\right),
  : P_{\mathrm{e}}(k,\gamma) = \alpha(k,\gamma) \pi_0 + \beta(k,\gamma) \pi_1,
  \end{equation}
respectively, the probability of false alarm, probability of miss, and
Bayes (average) error probability. In this paper,
we adopt the \emph{minimum Bayes error probability} criterion, both for
the centralized and later for our distributed detector, and, from now on,
we refer to it simply as the error probability.
A standard Theorem (Theorem 3.4.3.,~\cite{DemboZeitouni})
says that, for any choice of $\gamma \in (\gamma_0,\gamma_1)$,
the error probability decays exponentially fast to zero in $k$.
For $\gamma \notin (\gamma_0,\gamma_1)$, the error
probability does not converge to zero at all. To see
this, assume that $H_1$ is true, and let $\gamma \geq \gamma_1$.
Then, by noting that $\mathbb{E}[D(k)|H_1] = \gamma_1$, for all $k$, we have
that $\beta(k,\gamma) = \mathbb{P}(D(k)<\gamma|H_1) \geq  \mathbb{P}(D(k) \leq \gamma_1|H_1) \rightarrow \frac{1}{2}$ as
$k \rightarrow \infty$, by the central limit theorem.

Denote by $I_l(\cdot)$, $l=0,1,$ the Fenchel-Legendre transform~\cite{DemboZeitouni} of $\Lambda_l(\cdot)$:
\begin{equation}
\label{eqn-I-0-def}
I_l(z) = \sup_{\lambda \in {\mathbb R}} \lambda z - \Lambda_l(\lambda),\:z \in {\mathbb R}.
\end{equation}
It can be shown~\cite{DemboZeitouni} that $I_l(\cdot)$ is nonnegative, strictly convex, $I_l(\gamma_l)=0$, for $l=0,1$, and $I_1(z) = I_0(z)-z$,~\cite{DemboZeitouni}. We now state
the result on the centralized detector's asymptotic performance.
\begin{lemma}
\label{lemma-centralized-detector}
Let Assumption~\ref{ass_1} hold, and consider the family of centralized detectors \eqref{eq-centralized-dec-rule} with
constant threshold $\gamma=\gamma \in (\gamma_0,\gamma_1).$ Then,
the best (maximal) error exponent:
\[
\lim_{k \rightarrow \infty} -\frac{1}{k} \log P_{\mathrm{e}}(k,\gamma)
\]
is achieved for the zero threshold $\gamma=0$ and equals $N C_{\mathrm{ind}},$ where $C_{\mathrm{ind}} = I_0(0).$
\end{lemma}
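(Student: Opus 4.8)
The plan is to recognize the test statistic $D(k)$ in~\eqref{eq-centralized-dec-rule} as the empirical mean of $Nk$ i.i.d.\ random variables and to invoke Cram\'er's theorem on each error term separately. Under Assumption~\ref{ass_1} the log-likelihood ratios $\{L_i(t)\}$ are i.i.d.\ in both space and time, so under $H_0$ the quantity $D(k)=\frac{1}{Nk}\sum_{t,i}L_i(t)$ is the sample average of $Nk$ i.i.d.\ copies of $L_1(1)$, each with LMGF $\Lambda_0$ and mean $\gamma_0$. Since $\gamma\in(\gamma_0,\gamma_1)$, the false-alarm event $\{D(k)\ge\gamma\}$ is an upper large-deviation event ($\gamma>\gamma_0$), and Cram\'er's theorem gives
\[
\lim_{k\to\infty}-\tfrac{1}{Nk}\log\alpha(k,\gamma)=I_0(\gamma),\qquad\text{hence}\qquad\lim_{k\to\infty}-\tfrac{1}{k}\log\alpha(k,\gamma)=N\,I_0(\gamma).
\]
Symmetrically, under $H_1$ the statistic is the average of $Nk$ i.i.d.\ copies with LMGF $\Lambda_1$ and mean $\gamma_1$, and since $\gamma<\gamma_1$ the miss event $\{D(k)<\gamma\}$ is a lower large-deviation event, yielding $\lim_{k}-\frac1k\log\beta(k,\gamma)=N\,I_1(\gamma)$. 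The everywhere-finiteness of $\Lambda_0$ (and hence $\Lambda_1$) in Assumption~\ref{ass_finite_LMGF} is precisely what forces the matching upper and lower LDP bounds to collapse to these exact limits at every interior $\gamma$.

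Next I would combine the two terms. As the priors $\pi_0,\pi_1$ are fixed constants in $(0,1)$, the Bayes error $P_{\mathrm{e}}(k,\gamma)$ in~\eqref{eqn-centralized-alpha-beta-pe} is a fixed convex combination of two quantities decaying at rates $N\,I_0(\gamma)$ and $N\,I_1(\gamma)$, and the decay rate of such a sum is the smaller of the two exponents:
\[
\lim_{k\to\infty}-\tfrac{1}{k}\log P_{\mathrm{e}}(k,\gamma)=N\,\min\{I_0(\gamma),\,I_1(\gamma)\}.
\]
It then remains to maximize $\min\{I_0(\gamma),I_1(\gamma)\}$ over $\gamma\in(\gamma_0,\gamma_1)$.

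For the optimization I would use the identity $I_1(z)=I_0(z)-z$ recorded above. It gives $I_0(\gamma)-I_1(\gamma)=\gamma$, so $I_0(\gamma)\ge I_1(\gamma)$ exactly when $\gamma\ge 0$; hence $\min\{I_0,I_1\}$ equals $I_0(\gamma)$ on $(\gamma_0,0]$ and $I_1(\gamma)$ on $[0,\gamma_1)$. Since $I_0$ is strictly convex with its minimum (value $0$) at $\gamma_0<0$, it is strictly increasing on $(\gamma_0,0]$; likewise $I_1$ is strictly convex with minimum $0$ at $\gamma_1>0$, hence strictly decreasing on $[0,\gamma_1)$. Therefore the pointwise minimum strictly increases up to $\gamma=0$ and strictly decreases thereafter, so the unique maximizer is $\gamma^\star=0$, where $I_0(0)=I_1(0)=:C_{\mathrm{ind}}$. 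This delivers the optimal exponent $N\,C_{\mathrm{ind}}$ and identifies, via Lemma~\ref{lemma-log-momentgen-fcns}, $C_{\mathrm{ind}}=I_0(0)=\sup_{\lambda}\bigl(-\Lambda_0(\lambda)\bigr)=-\min_{\lambda\in[0,1]}\Lambda_0(\lambda)>0$, the Chernoff information of a single observation.

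The main obstacle is the first step rather than the optimization: one must ensure that Cram\'er's theorem yields exact limits (not merely one-sided bounds) for both error probabilities at once. The care lies in checking that $\gamma$ sits in the interior of the effective domain where $I_0$ and $I_1$ are continuous and attained — which is exactly what Assumption~\ref{ass_finite_LMGF} guarantees — and in justifying that the exponential rate of a sum of two exponentially small terms equals the minimum of their individual rates.
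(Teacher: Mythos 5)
Your proof is correct, but it follows a more explicit route than the paper's. The paper's proof is essentially one line: it observes that the LMGF of the per-time-step sum $\sum_{i=1}^N L_i(t)$ is $N\Lambda_0(\lambda)$ by spatial independence, and then invokes the Chernoff lemma (Corollary 3.4.6 of Dembo--Zeitouni) to read off $\lim_k -\frac{1}{k}\log P_{\mathrm e}(k,0) = N\max_{\lambda\in[0,1]}\{-\Lambda_0(\lambda)\} = N I_0(0)$; the optimality of $\gamma=0$ is inherited from that lemma's assertion that no test beats the zero-threshold likelihood ratio test asymptotically. You instead unpack the argument: Cram\'er's theorem applied separately to the $Nk$-sample empirical mean under each hypothesis gives the exact exponents $N I_0(\gamma)$ and $N I_1(\gamma)$ for every $\gamma\in(\gamma_0,\gamma_1)$ (the tightness of the lower bound being justified because $\gamma\in(\Lambda_0'(0),\Lambda_0'(1))$ lies in the range of $\Lambda_0'$, so the supremum defining $I_l(\gamma)$ is attained), the min rule for sums of exponentially decaying terms yields $N\min\{I_0(\gamma),I_1(\gamma)\}$, and the identity $I_1=I_0-z$ together with strict convexity pins the maximizer at $\gamma^\star=0$. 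What your version buys is an explicit per-threshold exponent formula and a self-contained demonstration that $\gamma=0$ is optimal \emph{within the family of constant-threshold tests}, which the paper leaves implicit; what the paper's version buys is brevity and the stronger (cited) statement that $\gamma=0$ is optimal over \emph{all} tests. Both are valid proofs of the stated lemma.
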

The quantity $C_{\mathrm{ind}}$ is referred to as the \emph{Chernoff information} of
a single sensor observation $Y_i(t).$ Lemma \ref{lemma-centralized-detector}
says that the centralized detector' error exponent is $N$ times
larger than an individual sensor's error exponent. We remark that, even if we
allow for time-varying thresholds $\gamma_k=\gamma$, the error
exponent $N C_{\mathrm{ind}}$ cannot be improved, i.e.,
 the centralized detector with zero threshold is asymptotically optimal
 over all detectors. We will see that, when a certain condition
 on the network connectivity holds, the distributed
 detector is asymptotically optimal, i.e., achieves the best error exponent $N C_{\mathrm{ind}}$,
 and the zero threshold is again optimal. However,
 when the network connectivity condition is not met,
 the distributed detector is no longer asymptotically optimal,
 and the optimal threshold may be non zero.
\begin{proof}[Proof of Lemma \ref{lemma-centralized-detector}]
Denote by $\Lambda_{0,N}$ the LMGF for
the log-likelihood ratio $\sum_{i=1}^N L_i(t)$
for the observations of all sensors at time $t$. Then,
$\Lambda_{0,N}(\lambda)=N \Lambda_0(\lambda)$, by the i.i.d. in space assumption on the sensors' observations. The
Lemma now follows by the Chernoff lemma (Corollary 3.4.6,~\cite{DemboZeitouni}):
\begin{equation*}
\lim_{k\rightarrow \infty} -\frac{1}{k}\log P_{\mathrm e}(k,0) = \max_{\lambda \in [0,1]} \left\{- \Lambda_{0,N}(\lambda)
\right \} = N \max_{\lambda \in [0,1]} \left\{- \Lambda_{0}(\lambda)\right\} = N I_0(0).
\end{equation*}
\vspace{-6mm}
\end{proof}

\vspace{-4mm}
\subsection{Distributed detection algorithm}
\label{subsec-Detection-Alg}
We now consider distributed detection when the sensors cooperate through a randomly varying network. Specifically,
  we consider the running consensus distributed detector proposed in~\cite{running-consensus-detection}.
   Each node $i$ maintains its local decision variable $x_i(k)$, which is a local estimate of the global optimal decision variable $D(k)$ in~\eqref{eq-centralized-dec-rule}. Note that $D(k)$ is \textit{not} locally available.
   At each time $k$, each sensor $i$ updates $x_i(k)$ in two ways: 1) by incorporating its new observation $Y_i(k)$
    to make an intermediate decision variable $\frac{k-1}{k}x_i(k-1)+\frac{1}{k}L_i(k)$; and 2)~by exchanging the
    intermediate decision variable
     locally with its neighbors and computing the weighted average
     of its own and the neighbors' intermediate variables.

More precisely, the update of $x_i(k)$ is as follows:
\begin{eqnarray}
\label{eqn-running-cons-sensor-i}
x_i(k) = \sum_{j \in O_i(k)}  W_{ij}(k) \left(  \frac{k-1}{k}x_j(k-1)+\frac{1}{k}L_j(k) \right), \:k=1,2,...\:\:x_i(0) = 0.
\end{eqnarray}
Here $O_i(k)$ is the (random) neighborhood of sensor $i$ at time $k$ (including $i$), and
$ W_{ij}(k)$ are the (random) averaging weights. The sensor~$i$'s local decision test at time $k$ is:
\begin{equation}
\label{eq-distributed-dec-rule}
x_i(k) \stackrel[H_0]{H_1}{\gtrless}\gamma,
\end{equation}
i.e., $H_1$ (respectively, $H_0$) is decided when $x_i(k)\geq \gamma$ (respectively, $x_i(k) < \gamma$).

%

Write the consensus+innovations algorithm~\eqref{eqn-running-cons-sensor-i} in vector form. Let
$x(k) =
(x_1(k),x_2(k),...,x_N(k))^\top$ and
$L(k)=(L_1(k),...,L_N(k))^\top$.
Also, collect the averaging weights $W_{ij}(k)$ in the $N \times N$
  matrix $W(k)$, where, clearly, $W_{ij}(k)=0$ if the sensors
   $i$ and $j$ do not communicate at time step $k$.
The algorithm~\eqref{eqn-running-cons-sensor-i} becomes:
\begin{eqnarray}
\label{eqn_recursive_algorithm}
x(k) = W(k) \left( \frac{k-1}{k} x(k-1) + \frac{1}{k} L(k) \right),\:k = 1,2,...\:\:x_i(0) = 0.
\end{eqnarray}

%
%

\mypar{Network model}
We  state the assumption on the random averaging matrices
 $W(k)$.
\begin{assumptions}
\label{assumption-W(k)}
The averaging matrices  $W(k)$ satisfy the following:
\begin{enumerate} [(a)]
\item The sequence $\left\{ W(k)  \right\}_{k=1}^{\infty}$ is i.i.d.
\item $W(k)$ is symmetric
 and stochastic (row-sums equal 1 and $W_{ij}(k)\geq 0$) with probability one, $\forall k$.
\item There exists $\eta>0$, such that, for any realization $W(k)$, $W_{ii}(k) \geq \eta$, $\forall i$,
and, $W_{ij}(k) \geq \eta$ whenever $W_{ij}(k)>0$, $i \neq j$.
\item $W(k)$ and $Y(t)$ are mutually independent over all $k$ and $t$.
\end{enumerate}
\end{assumptions}
Condition~(c) is mild and says that: 1) sensor $i$ assigns a
non-negligible weight to itself; and 2) when sensor
$i$ receives a message from sensor $j$, sensor $i$
 assigns a non-negligible weight to sensor~$j$.


Define the matrices $\Phi(k,t)$ by:
\begin{equation}
\label{eqn-def-Phi}
\Phi(k,t):=W(k)W(k-1)...W(t), \:\: k \geq t \geq 1.
\end{equation}
It is easy to verify from~\eqref{eqn_recursive_algorithm} that $x(k)$ equals:
\begin{equation}
\label{alg-unwinded}
x(k) = \frac{1}{k} \sum_{t=1}^{k} \Phi(k,t) L(t),\,\,k=1,2,...
\end{equation}

\mypar{Choice of threshold $\gamma$} We restrict the choice of threshold $\gamma$ to
$\gamma \in (\gamma_0,\gamma_1)$, $\gamma_0<0$, $\gamma_1>0$, where we recall
$\gamma_l = \mathbb{E}[L_1(1)|H_l]$, $l=0,1.$ Namely,
$W(t)$ is a stochastic matrix, hence $W(t)1=1$, for all $t$, and
thus $\Phi(k,t) 1 =1$. Also, $\mathbb{E}[L(t)|H_l] =\gamma_l 1$, for all $t$, $l=0,1$. Now, by iterating expectation:
\begin{eqnarray*}
\mathbb{E}[x(k)|H_l] = \mathbb{E} [ \mathbb{E}[x(k)|H_l, W(1),...,W(k)]]
 = \mathbb{E} \left[\frac{1}{k} \sum_{t=1}^{k} \Phi(k,t) \mathbb{E}[L(t)|H_l] \right]= \gamma_l 1,\:l=0,1,
\end{eqnarray*}
and $\mathbb{E}[x_i(k)|H_l]=\gamma_l $, for all $i,k$. Moreover,
it can be shown (proof is omitted due to lack of space)
 that $x_i(k)$ converges in probability
 to $\gamma_l$ under $H_l$. Now,
 a similar argument as with the centralized detector in \ref{subsec-Centralized-detection}
  shows that for $\gamma \notin (\gamma_0,\gamma_1)$,
  the error probability does not converge to zero.
  We will show that, for any $\gamma \in (\gamma_0,\gamma_1)$,
  the error probability converges
  to 0 exponentially fast, and we
  find the optimal $\gamma=\gamma^\star$ that maximizes
  a certain lower bound on the exponent of the error probability.

\mypar{Network connectivity} From~\eqref{alg-unwinded},
we can see that the matrices
$\Phi(k,t)$ should be as close to $J$ as possible for enhanced detection performance.
Namely, the ideal (unrealistic)
 case when $\Phi(k,t) \equiv J$ for all $k,t$, corresponds
 to the scenario where each sensor $i$ is equivalent to the
 optimal centralized detector. It is well known that, under certain conditions,
the matrices $\Phi(k,t)$ converge in probability to $J$:
\begin{equation*}
\label{eqn-pomocna}
\mathbb {P}\left(\|\Phi(k,t)-J\|> \epsilon\right) \rightarrow 0 \: \mathrm{as}\: (k-t) \rightarrow \infty,\: \epsilon>0,
\end{equation*}
such that $\mathbb {P}\left(\|\Phi(k,t)-J\|> \epsilon\right)$ vanishes
exponentially fast in $(k-t)$, i.e.,
$\mathbb {P}\left(\|\Phi(k,t)-J\|> \epsilon\right) \approx r^{(k-t)}$, $r \in [0,1]$.
The quantity $r$
 determines the speed of convergence of the matrices $\Phi(k,t)$.
The closer to zero $r$ is, the faster consensus is. We refer to $|\log r|$ as the network connectivity. We will see
that the distributed detection performance significantly depends on $r$.
Formally, $|\log r| = -\log r$ is given by:\footnote{It can be shown
that the limit in \eqref{eqn-r-def} exists and that it does not depend on $\epsilon$.}
\begin{equation}
\label{eqn-r-def}
|\log r| : = \lim_{(k-t) \rightarrow \infty} -\frac{1}{k-t} \log \mathbb {P}\left(\|\Phi(k,t)-J\|> \epsilon\right).
\end{equation}
For the exact calculation of $r$, we refer to~\cite{rate-of-consensus}. Reference~\cite{rate-of-consensus} shows that, for the commonly used models of $W(k)$,
gossip and link failure (links in the underlying network fail independently, with possibly mutually different probabilities),
$r$ is easily computable, by solving a certain min-cut problem.
In general, $r$ is not easily computable, but all our results
 (Theorem \ref{Theorem-rate-bounds-for-alpha-and-beta}, Corollary \ref{Corollary-main-result}, Corollary \ref{Corollary-main-result-non-id}) hold when $r$ is replaced by an upper bound.
 An upper bound on $r$ is given by $\mu_2 \left( \mathbb{E} \left[  W^2(k)\right] \right)$,~\cite{rate-of-consensus}.

The following Lemma easily follows from~\eqref{eqn-r-def}.
\begin{lemma}
\label{lemma-conv-of-Phi}
Let Assumption \ref{assumption-W(k)} hold. Then, for any $\delta>0$,
 there exists a constant $C(\delta) \in (0,\infty)$ (independent of $\epsilon \in (0,1)$) such that:
 \begin{equation*}
 \mathbb {P}\left(\|\Phi(k,t)-J\|> \epsilon\right) \leq C(\delta) e^{-(k-t) (|\log r| - \delta)},\:\mathrm{for \,\,all\,\,}k\geq t.
 \end{equation*}
 \end{lemma}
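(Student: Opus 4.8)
The plan is to read the estimate off directly from the definition of $|\log r|$ in \eqref{eqn-r-def}, treating it as a routine consequence of the convergence of a sequence. First I would fix $\delta>0$ and set $n:=k-t$. By the i.i.d.\ assumption on $\{W(k)\}$ (Assumption~\ref{assumption-W(k)}(a)), the matrix $\Phi(k,t)=W(k)\cdots W(t)$ has a law depending on $(k,t)$ only through $n$, so $p_n(\epsilon):=\mathbb{P}\!\left(\|\Phi(k,t)-J\|>\epsilon\right)$ is a function of $n$ alone. Definition~\eqref{eqn-r-def} states precisely that $-\tfrac{1}{n}\log p_n(\epsilon)\to|\log r|$ as $n\to\infty$; unwinding the definition of this limit, there is an index $N_0$ beyond which $-\tfrac1n\log p_n(\epsilon)\ge |\log r|-\delta$, equivalently $p_n(\epsilon)\le e^{-n(|\log r|-\delta)}$ for all $n\ge N_0$.

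It then remains to cover the finitely many small indices $n\in\{0,1,\dots,N_0-1\}$, for which I would use only the trivial bound $p_n(\epsilon)\le 1$. Since there are finitely many such $n$, the number $C(\delta):=\max\{1,\,e^{(N_0-1)(|\log r|-\delta)}\}$ is finite, and by construction $p_n(\epsilon)\le 1\le C(\delta)\,e^{-n(|\log r|-\delta)}$ on this range, while $p_n(\epsilon)\le e^{-n(|\log r|-\delta)}\le C(\delta)\,e^{-n(|\log r|-\delta)}$ for $n\ge N_0$. This yields the claimed inequality for every $k\ge t$, using nothing beyond elementary properties of convergent sequences.

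The step I expect to be the genuine obstacle is the uniformity of the prefactor \emph{across} $\epsilon\in(0,1)$, since the threshold $N_0$ extracted above, and hence $C(\delta)$, depend a priori on $\epsilon$. The exponent is not an issue: the footnote to \eqref{eqn-r-def} guarantees that the limit equals $|\log r|$ for every fixed $\epsilon\in(0,1)$, so the decay rate is already $\epsilon$-free. For the constant I would exploit that $\epsilon\mapsto p_n(\epsilon)$ is nonincreasing, so on any regime $\epsilon\ge\epsilon_0>0$ a single threshold $N_0(\epsilon_0)$, and therefore a single $C(\delta)$, serves for all $\epsilon$ in that range; this is all that the way the lemma is invoked in the sequel requires. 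The delicate regime is $\epsilon\downarrow 0$ (where $\{\|\Phi-J\|>\epsilon\}$ swells toward $\{\Phi\neq J\}$), and here I would appeal to the explicit large-deviation characterization of $\|\Phi(k,t)-J\|$ in \cite{rate-of-consensus}, in which the dominant rare event and its exponent are identified combinatorially; this is the one point where the bare definition \eqref{eqn-r-def} does not suffice and the structure of the averaging model must be invoked, the remainder being routine manipulation of a convergent sequence.
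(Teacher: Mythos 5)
Your argument is correct and is essentially the derivation the paper has in mind: the paper offers no proof beyond the remark that the lemma ``easily follows from'' \eqref{eqn-r-def}, and your first two paragraphs supply exactly that limit-unwinding plus patching of the finitely many initial indices with a trivial bound. Your closing discussion of the $\epsilon$-uniformity of $C(\delta)$ is in fact more careful than the paper itself, and you rightly note that the way the bound is invoked later (the limit in $k$ is taken for fixed $\epsilon$ before letting $\epsilon\downarrow 0$) only ever requires an $\epsilon$-dependent constant, so deferring the full uniformity to \cite{rate-of-consensus} is a legitimate resolution.
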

 %
 %

 %
%
%
%
%
%
\section{Main results: Asymptotic analysis and error exponents for distributed detection}
\label{Main-result}
Subsection \ref{subsect-statement-of-results} states our main results
on the asymptotic performance of consensus+innovations distributed detection;
 subsection \ref{subsect-proofs} proves these results.
\subsection{Statement of main results}
\label{subsect-statement-of-results}
In this section, we analyze the performance of distributed detection in terms of the detection error exponent, when the number of observations (per sensor), or the size $k$ of the observation interval tends to $+\infty$. As we will see next, we show that there exists a threshold on the network connectivity $|\log r|$ such that if $|\log r|$ is above this threshold, each node in the network achieves asymptotic optimality (i.e., the error exponent at each node is the total Chernoff information equal to $N C_{\mathrm {ind} }$). When $|\log r|$ is below the threshold, we give a lower bound for the error exponent. Both the threshold and the lower bound are given \emph{solely} in terms of the log-moment generating function $\Lambda_0$ and the number of sensors $N$.
These findings are summarized in Theorem~\ref{Theorem-rate-bounds-for-alpha-and-beta} and Corollary~\ref{Corollary-main-result} below.

Let $\alpha_i(k, \gamma)$, $\beta_i(k, \gamma)$, and $P_{\mathrm e,i} (k, \gamma)$ denote the probability of false alarm, the probability of miss,
and the error probability, respectively, of sensor $i$ for the detector~\eqref{eqn-running-cons-sensor-i} and~\eqref{eq-distributed-dec-rule}, for the threshold equal to $\gamma$:
\begin{eqnarray}
\label{eqn-P-fa}
\alpha_i(k, \gamma)=\mathbb{P} \left(x_i(k)\geq \gamma|H_0\right),\:
\beta_i(k, \gamma)=\mathbb{P} \left(x_i(k)< \gamma|H_1\right),\:
P_{\mathrm e,i} (k, \gamma) = \pi_0 \alpha_i(k; \gamma)+\pi_1 \beta_i(k; \gamma),
\end{eqnarray}
where, we recall, $\pi_1$ and $\pi_0$ are the prior probabilities.

\begin{theorem}
\label{Theorem-rate-bounds-for-alpha-and-beta}
Let Assumptions 1-3 hold and consider the family of distributed detectors in~\eqref{eqn-running-cons-sensor-i} and~\eqref{eq-distributed-dec-rule} with $\gamma \in (\gamma_0,\gamma_1)$. Let $\lambda_l^{\mathrm{s}}$
be the zero of the function:
 \begin{equation}
 \label{def-Delta}
 \Delta_l(\lambda): = \Lambda_l(N \lambda)-|\log r|-N\Lambda_l(\lambda),\:\:l=0,1,
 \end{equation}
and define $\gamma_l^{-}, \gamma_l^{+}$, $l=0,1$ by
\begin{align}
\label{eqn-gamma-plus-minus}
\gamma_0^{-}&=\Lambda_0^\prime(\lambda_0^{\mathrm{s}}),\;\gamma_0^{+}=\Lambda_0^\prime(N\lambda_0^{\mathrm{s}})\geq \gamma_0^{-}\\
\gamma_1^{-}&=\Lambda_1^\prime(N\lambda_1^{\mathrm{s}}),\;\gamma_1^{+}=\Lambda_1^\prime(\lambda_1^{\mathrm{s}})\geq \gamma_1^{-}.
\end{align}
Then, for every $\gamma\in (\gamma_0,\gamma_1)$,
at each sensor~$i$, $i=1,\ldots,N$, we have:
\begin{eqnarray}
\label{theorem-rate-bound-for-alpha}
\liminf_{k\rightarrow \infty} -\frac{1}{k}\log \alpha_i(k,\gamma) \geq  B_0(\gamma),
\:\:\:\:\:
\liminf_{k\rightarrow \infty}-\frac{1}{k}\log \beta_i(k,\gamma)\geq B_1(\gamma),
\end{eqnarray}
where
\begin{align*}
B_0(\gamma)&=\hspace{-1mm}\max_{\lambda\in [0,1]} N\gamma\lambda-\max\{N\Lambda_0(\lambda), \Lambda_0(N\lambda)-|\log r| \}=\left\{ \begin{array}{lll} NI_0(\gamma),  &  \gamma\in (\gamma_0,\gamma_0^{-}] \\
NI_0(\gamma_0^{-})+N\lambda_0^{\mathrm{s}} (\gamma-\gamma_0^{-}), &   \gamma\in (\gamma_0^{-},\gamma_0^{+})\\
I_0(\gamma)+|\log r|, &   \gamma\in [\gamma_0^{+},\gamma_1) \end{array} \right. \\
B_1(\gamma)&= \hspace{-2mm}\max_{\lambda\in [-1,0]} N\gamma\lambda-\max\{N\Lambda_1(\lambda), \Lambda_1(N\lambda)-|\log r| \}=  \left\{ \begin{array}{lll} I_1(\gamma)+|\log r|, &    \gamma\in (\gamma_0,\gamma_1^{-}] \\
NI_1(\gamma_1^{+}) + N\lambda_1^{\mathrm{s}} (\gamma-\gamma_1^{+}), &    \gamma\in (\gamma_1^{-},\gamma_1^{+})\\
NI_1(\gamma), &    \gamma\in [\gamma_1^{+},\gamma_1) .\end{array} \right.
\end{align*}
\end{theorem}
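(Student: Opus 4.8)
The plan is to obtain each bound via an exponential Chernoff–Markov inequality applied to the explicit representation $x_i(k) = \frac{1}{k}\sum_{t=1}^k\sum_{j=1}^N [\Phi(k,t)]_{ij}L_j(t)$ from~\eqref{alg-unwinded}. I will treat the false-alarm bound in detail; the miss bound is symmetric, using $\lambda\in[-1,0]$ and $\Lambda_1$ in place of $\lambda\in[0,1]$ and $\Lambda_0$, together with $\Lambda_1(\lambda)=\Lambda_0(\lambda+1)$ from Lemma~\ref{lemma-log-momentgen-fcns}. For $s=N\lambda\geq 0$, Markov gives $\alpha_i(k,\gamma)\leq e^{-sk\gamma}\,\E[e^{s\,k x_i(k)}\mid H_0]$. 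Since the averaging matrices are independent of the observations (Assumption~\ref{assumption-W(k)}(d)) and the $L_j(t)$ are i.i.d.\ in time and space, conditioning on the network realization factorizes the conditional MGF, yielding $\E[e^{s\,kx_i(k)}\mid H_0]=\E\big[\exp\big(\sum_{t=1}^k\psi_t\big)\big]$ with $\psi_t:=\sum_{j=1}^N\Lambda_0\big(s[\Phi(k,t)]_{ij}\big)$. The two regimes in the statement come from two-sided control of $\psi_t$: because $\Lambda_0$ is convex with $\Lambda_0(0)=0$ (Lemma~\ref{lemma-log-momentgen-fcns}), the map $x\mapsto\Lambda_0(sx)$ is convex and vanishes at $0$, hence superadditive, so $\psi_t\leq\Lambda_0\big(s\sum_j[\Phi(k,t)]_{ij}\big)=\Lambda_0(s)$ always, while on the consensus event $\{\|\Phi(k,t)-J\|\leq\epsilon\}$ every entry is within $O(\epsilon)$ of $1/N$, so continuity of the finite function $\Lambda_0$ (Assumption~\ref{ass_finite_LMGF}) gives $\psi_t\leq N\Lambda_0(s/N)+c\epsilon$.

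The main obstacle is that the events $\{\|\Phi(k,t)-J\|>\epsilon\}$ are strongly correlated across $t$ (they share factors $W(\cdot)$), so their contributions cannot be treated as independent. I would resolve this by a monotonicity observation that collapses the correlation structure entirely. Writing $\Phi(k,t)=(\Phi(k,t')-J)\,\Phi(t'-1,t)$ after subtracting $J$ (using $J\Phi(t'-1,t)=J$) and noting that the symmetric stochastic matrix $\Phi(t'-1,t)$ has spectral norm $1$, one gets $\|\Phi(k,t)-J\|\leq\|\Phi(k,t')-J\|$ for all $t<t'$. Thus $t\mapsto\|\Phi(k,t)-J\|$ is nondecreasing, so the ``bad'' times $\mathcal{I}_t:=\mathcal{I}_{\{\|\Phi(k,t)-J\|>\epsilon\}}$ form a suffix of $\{1,\dots,k\}$, and the count $M:=\sum_t\mathcal{I}_t$ satisfies $\{M\geq m\}=\{\mathcal{I}_{k-m+1}=1\}$. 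Lemma~\ref{lemma-conv-of-Phi} then yields the clean tail bound $\mathbb{P}(M\geq m)\leq C(\delta)\,e^{-(m-1)(|\log r|-\delta)}$.

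From the bound $\psi_t\leq N\Lambda_0(s/N)+c\epsilon+(\Lambda_0(s)-N\Lambda_0(s/N))\mathcal{I}_t$ I obtain $\E[\exp(\sum_t\psi_t)]\leq e^{k(N\Lambda_0(s/N)+c\epsilon)}\,\E[e^{DM}]$ with $D:=\Lambda_0(s)-N\Lambda_0(s/N)\geq 0$. Summation by parts against the tail bound gives a geometric series in $e^{m(D-|\log r|+\delta)}$, so $\frac1k\log\E[e^{DM}]\leq\max\{0,\,D-|\log r|\}+\delta+o(1)$: the series is bounded when the consensus term dominates and is driven by the $m=k$ term (a sustained failure of mixing) otherwise. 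Combining, $\frac1k\log\E[e^{s\,kx_i(k)}\mid H_0]\leq\max\{N\Lambda_0(s/N),\,\Lambda_0(s)-|\log r|\}+c\epsilon+\delta+o(1)$, and letting $\epsilon,\delta\to0$ and substituting $s=N\lambda$ turns the Chernoff bound into $-\frac1k\log\alpha_i(k,\gamma)\geq N\gamma\lambda-\max\{N\Lambda_0(\lambda),\Lambda_0(N\lambda)-|\log r|\}$ for each $\lambda\in[0,1]$; taking the maximum over $\lambda$ gives $B_0(\gamma)$.

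It remains to reduce the variational formula to the stated piecewise expression, which is a routine convex-analysis computation. Since $\Delta_0(\lambda)=\Lambda_0(N\lambda)-|\log r|-N\Lambda_0(\lambda)$ satisfies $\Delta_0(0)=-|\log r|<0$, the consensus term $N\Lambda_0(\lambda)$ dominates the maximum for $\lambda<\lambda_0^{\mathrm{s}}$ and the escape term for $\lambda>\lambda_0^{\mathrm{s}}$, where $\lambda_0^{\mathrm{s}}$ is the zero of $\Delta_0$. Optimizing $N\gamma\lambda-N\Lambda_0(\lambda)$ on the first branch recovers $NI_0(\gamma)$ for $\gamma\leq\gamma_0^{-}=\Lambda_0'(\lambda_0^{\mathrm{s}})$; optimizing $N\gamma\lambda-\Lambda_0(N\lambda)+|\log r|$ on the second (with $u=N\lambda$) recovers $I_0(\gamma)+|\log r|$ for $\gamma\geq\gamma_0^{+}=\Lambda_0'(N\lambda_0^{\mathrm{s}})$; and for $\gamma\in(\gamma_0^{-},\gamma_0^{+})$ the optimizer is pinned at the kink $\lambda=\lambda_0^{\mathrm{s}}$, so the envelope has constant slope $N\lambda_0^{\mathrm{s}}$ in $\gamma$, giving the linear interpolation $NI_0(\gamma_0^{-})+N\lambda_0^{\mathrm{s}}(\gamma-\gamma_0^{-})$. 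Monotonicity of $\Lambda_0'$ gives $\gamma_0^{+}\geq\gamma_0^{-}$, and the identical argument under $H_1$ produces $B_1(\gamma)$.
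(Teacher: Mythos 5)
Your proposal is correct and follows essentially the same route as the paper's proof: the Chernoff--Markov bound on $e^{Nk\lambda x_i(k)}$, conditioning on the weight matrices, the monotonicity of $t\mapsto\|\Phi(k,t)-J\|$ (the paper's Lemma~\ref{lemma-new-aux}) so that the ``bad'' times form a suffix, the two-sided bound on $\sum_j\Lambda_0(N\lambda\Phi_{i,j}(k,t))$ via convexity and $\Lambda_0(0)=0$, the geometric-series/max-term estimate against the tail bound of Lemma~\ref{lemma-conv-of-Phi}, and the final subdifferential analysis of $\max\{N\Lambda_0(\lambda),\Lambda_0(N\lambda)-|\log r|\}$ over the three regions. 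Your packaging of the paper's partition $\{\mathcal{A}_s\}$ as the count $M$ of bad times with $\E[e^{DM}]$ handled by summation by parts is only a cosmetic reformulation of the same argument.
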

%
%
%
%
%

\begin{corollary}
\label{Corollary-main-result}
Let Assumptions 1-3 hold and consider the family of distributed detectors in~\eqref{eqn-running-cons-sensor-i} and~\eqref{eq-distributed-dec-rule} parameterized by detector thresholds $\gamma \in (\gamma_0,\gamma_1)$. Then:
\begin{enumerate}[(a)]
\item
\begin{eqnarray}
\label{eqn-rate}
\liminf_{k \rightarrow \infty} -\frac{1}{k} \log P_{\mathrm e,i}(k,\gamma)
 \geq \min\{B_0(\gamma), B_1(\gamma)\}>0,
\end{eqnarray}
and the lower bound in \eqref{eqn-rate} is maximized for the point $\gamma^\star \in (\gamma_0,\gamma_1)$\footnote{As we show in the proof, such a point exists and is unique.} at which $B_0(\gamma^\star)=B_1(\gamma^\star).$
\item Consider $\lambda^\bullet = \mathrm{arg\,min}_{\lambda \in {\mathbb R}} \Lambda_0(\lambda)$, and let:
\begin{eqnarray}
\label{eqn-threshold}
\mathrm{thr}\left(\Lambda_0, N\right) =  \max \{\Lambda_0(N \lambda^\bullet)-N\Lambda_0(\lambda^\bullet),
\Lambda_0(1-N(1- \lambda^\bullet))-N\Lambda_0(\lambda^\bullet)\},
\end{eqnarray}
Then, when $|\log r|\geq \mathrm{thr}(\Lambda_0,N)$, each sensor $i$ with the detector threshold set to $\gamma=0$, is asymptotically optimal:
 \[
 \lim_{k \rightarrow \infty}-\frac{1}{k} \log P_{\mathrm e,i}(k,0) = N C_{\mathrm{ind}}.
 \]
%
\item When
$\Lambda_0(\lambda)=\Lambda_0(1-\lambda)$,
for $\lambda \in [0,1]$ $\gamma^\star=0$, irrespective of the value of $r$ (even when $|\log r|<\mathrm{thr}(\Lambda_0,N)$.)
\end{enumerate}
\end{corollary}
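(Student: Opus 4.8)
The plan is to derive all three parts from Theorem~\ref{Theorem-rate-bounds-for-alpha-and-beta} together with two structural facts about the exponent curves $B_0$ and $B_1$: their strict monotonicity in $\gamma$, and an explicit evaluation of the min-max that defines them at $\gamma=0$. For part (a), I would start from $P_{\mathrm{e},i}(k,\gamma)=\pi_0\alpha_i+\pi_1\beta_i\le\max\{\alpha_i,\beta_i\}$ (using $\pi_0+\pi_1=1$), which gives $-\frac{1}{k}\log P_{\mathrm{e},i}\ge\min\{-\frac{1}{k}\log\alpha_i,-\frac{1}{k}\log\beta_i\}$; taking $\liminf$, using $\liminf_k\min\{a_k,b_k\}\ge\min\{\liminf_k a_k,\liminf_k b_k\}$, and invoking Theorem~\ref{Theorem-rate-bounds-for-alpha-and-beta} yields the bound $\min\{B_0(\gamma),B_1(\gamma)\}$. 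Strict positivity on $(\gamma_0,\gamma_1)$ follows because every piece of $B_0,B_1$ is built from $I_0,I_1$ (strictly convex, vanishing only at $\gamma_0,\gamma_1$) plus the nonnegative term $|\log r|$.

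For the maximizer of $\min\{B_0,B_1\}$, I would use the variational forms $B_0(\gamma)=\max_{\lambda\in[0,1]}\{N\gamma\lambda-g_0(\lambda)\}$ and $B_1(\gamma)=\max_{\lambda\in[-1,0]}\{N\gamma\lambda-g_1(\lambda)\}$, where $g_l(\lambda)=\max\{N\Lambda_l(\lambda),\Lambda_l(N\lambda)-|\log r|\}$. As suprema of affine functions of $\gamma$ with slopes $N\lambda$, $B_0$ is convex and nondecreasing while $B_1$ is convex and nonincreasing. Evaluating the $\lambda$-derivative of the inner objective at $\lambda=0$ (which equals $N(\gamma-\gamma_0)>0$ for $B_0$ and $N(\gamma-\gamma_1)<0$ for $B_1$, since $g_l'(0)=N\Lambda_l'(0)=N\gamma_l$ by Lemma~\ref{lemma-log-momentgen-fcns}) shows the maximizers are strictly positive, resp.\ strictly negative, so $B_0$ is strictly increasing and $B_1$ strictly decreasing. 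Checking the endpoints ($B_0\to0$ and $B_1\to-\gamma_0+|\log r|>0$ as $\gamma\downarrow\gamma_0$; $B_0\to\gamma_1+|\log r|>0$ and $B_1\to0$ as $\gamma\uparrow\gamma_1$, using $I_1=I_0-z$) and applying the intermediate value theorem to the strictly increasing $B_0-B_1$ gives a unique crossing $\gamma^\star$, which then maximizes $\min\{B_0,B_1\}$.

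For part (b), the key computation is $B_0(0)=-\min_{\lambda\in[0,1]}g_0(\lambda)$. Since $\Lambda_0$ is convex with $\Lambda_0(0)=\Lambda_0(1)=0$ and negative in between, its minimizer $\lambda^\bullet$ lies in $(0,1)$ and $I_0(0)=-\Lambda_0(\lambda^\bullet)$. From the pointwise bound $g_0(\lambda)\ge N\Lambda_0(\lambda)\ge N\Lambda_0(\lambda^\bullet)$, the equality $\min g_0=N\Lambda_0(\lambda^\bullet)$ holds exactly when $\Lambda_0(N\lambda^\bullet)-|\log r|\le N\Lambda_0(\lambda^\bullet)$, i.e.\ $|\log r|\ge\Lambda_0(N\lambda^\bullet)-N\Lambda_0(\lambda^\bullet)$, giving $B_0(0)=NI_0(0)=NC_{\mathrm{ind}}$. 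Repeating for $B_1$ through $\Lambda_1(\lambda)=\Lambda_0(\lambda+1)$ (so the minimizer over $[-1,0]$ is $\lambda^\bullet-1$ and $\Lambda_1(N(\lambda^\bullet-1))=\Lambda_0(1-N(1-\lambda^\bullet))$) produces the second condition $|\log r|\ge\Lambda_0(1-N(1-\lambda^\bullet))-N\Lambda_0(\lambda^\bullet)$. The maximum of the two conditions is precisely $|\log r|\ge\mathrm{thr}(\Lambda_0,N)$, under which $B_0(0)=B_1(0)=NC_{\mathrm{ind}}$, so part (a) gives the $\liminf$ bound $\ge NC_{\mathrm{ind}}$. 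The matching $\limsup_k-\frac{1}{k}\log P_{\mathrm{e},i}(k,0)\le NC_{\mathrm{ind}}$ follows from the remark after Lemma~\ref{lemma-centralized-detector}: by~\eqref{alg-unwinded}, $x_i(k)$ is a function of all observations $\{Y_j(t)\}_{t\le k}$, so its Bayes error cannot beat the centralized optimum; combining the two yields the stated limit.

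For part (c), under the symmetry $\Lambda_0(\lambda)=\Lambda_0(1-\lambda)$ I would show $B_0(0)=B_1(0)$ directly and independently of $r$, forcing the unique crossing of part (a) to be $\gamma^\star=0$. Substituting $\mu=-\lambda$ in $B_1(0)$ and combining $\Lambda_1(\lambda)=\Lambda_0(\lambda+1)$ with symmetry gives $\Lambda_1(-\mu)=\Lambda_0(1-\mu)=\Lambda_0(\mu)$ and $\Lambda_1(-N\mu)=\Lambda_0(1-N\mu)=\Lambda_0(N\mu)$, so the min-max over $[-1,0]$ defining $B_1(0)$ coincides termwise with the one over $[0,1]$ defining $B_0(0)$. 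I expect the main obstacle to be the monotonicity and endpoint bookkeeping of part (a)—showing $B_0-B_1$ is \emph{strictly} increasing across all three pieces with the correct endpoint signs so that $\gamma^\star$ exists and is unique; the variational representation together with the $\lambda=0$ derivative test is the cleanest route and avoids a piecewise case analysis, after which the min-max evaluations in (b) and (c) are essentially algebraic once $\lambda^\bullet\in(0,1)$ is identified.
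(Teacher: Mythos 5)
Your proof is correct and follows essentially the same route as the paper: part (a) via the bounds of Theorem~\ref{Theorem-rate-bounds-for-alpha-and-beta} plus strict monotonicity of $B_0,B_1$ and a unique zero of $B_1-B_0$; part (b) by showing $B_0(0)=B_1(0)=NI_0(0)$ under the threshold condition and closing with the Chernoff upper bound on any test. Your variational treatment of monotonicity (slope of the inner objective at $\lambda=0$) and your direct minimization of $g_0$ at $\gamma=0$ are only cosmetically different from the paper's piecewise formulas and subdifferential bookkeeping for $\Phi_0$. One small point in part (c): you use $\Lambda_0(1-N\mu)=\Lambda_0(N\mu)$ for $\mu$ up to $1$, i.e., at arguments outside the interval $[0,1]$ on which the symmetry is hypothesized; this is harmless because under Assumption~\ref{ass_finite_LMGF} the LMGF is finite and real-analytic on $\mathbb{R}$, so agreement of $\Lambda_0(\lambda)$ and $\Lambda_0(1-\lambda)$ on $[0,1]$ extends to all of $\mathbb{R}$ --- but the step deserves a sentence, and the paper's own remark (which asserts $B_0(\gamma)=B_1(-\gamma)$) implicitly relies on the same extension.
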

Figure 1 (left) illustrates the error exponent lower bounds $B_0(\gamma)$
       and $B_1(\gamma)$ in Theorem~\ref{Theorem-rate-bounds-for-alpha-and-beta}, 
       while Figure~1~(right) illustrates the quantities in \eqref{eqn-gamma-plus-minus}. ( See the definition of the function $\Phi_0(\lambda)$ in~\eqref{eqn-Phi-0-def} in the proof of Theorem \ref{Theorem-rate-bounds-for-alpha-and-beta}.) We consider $N=3$ sensors and a discrete distribution of $Y_i(t)$ over a 5-point alphabet, with the distribution $[.2, .2, .2, .2, .2]$ under $H_1$, and $[0.01, 0.01, 0.01, 0.01, 0.96]$ under $H_0$. We set
         here $r=0.4.$
         \vspace{-5mm}
         \begin{figure}[thpb]
      \centering
      \includegraphics[height=2.1 in,width=3.1 in]{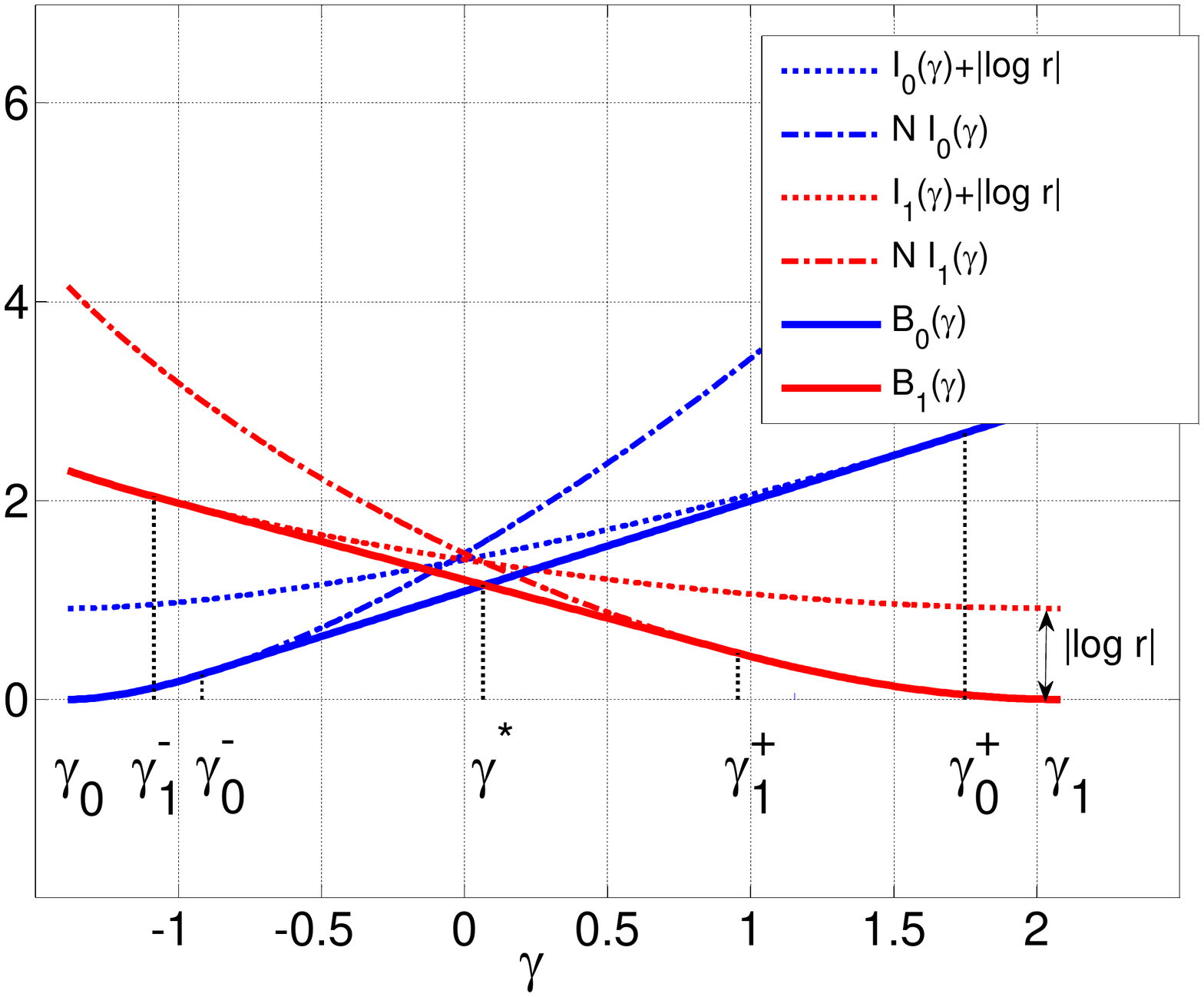}
      \includegraphics[height=2.1 in,width=3.1 in]{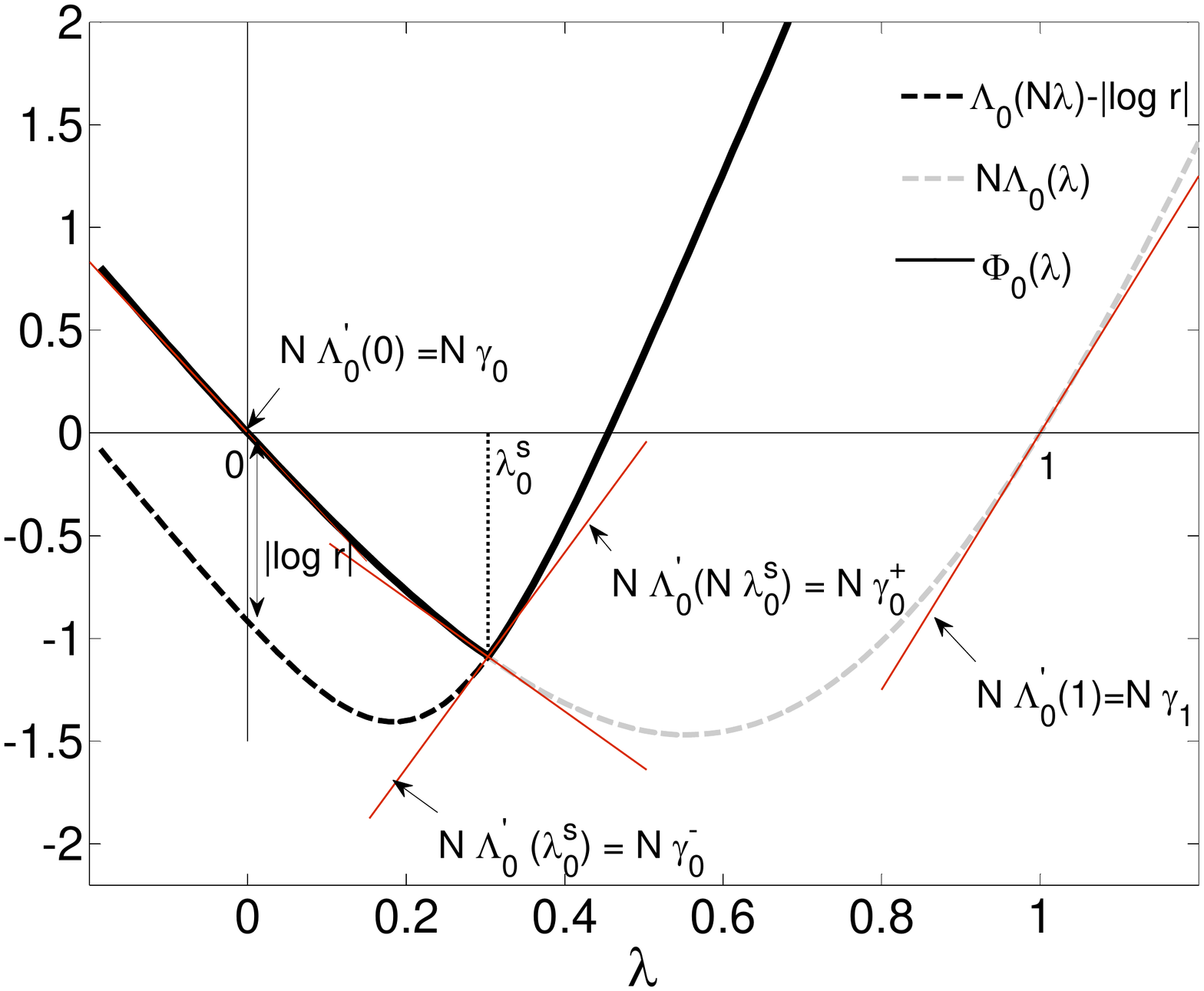}
      \caption{\textbf{Left:} Illustration of the error exponent lower bounds $B_0(\gamma)$
       and $B_1(\gamma)$ in Theorem~\ref{Theorem-rate-bounds-for-alpha-and-beta}; \textbf{Right:}
        Illustration of the function $\Phi_0(\lambda)$ in~\eqref{eqn-Phi-0-def},
        and the quantities in \eqref{eqn-gamma-plus-minus}. We consider $N=3$ sensors and a discrete distribution of $Y_i(t)$ over a 5-point alphabet, with the distribution $[.2, .2, .2, .2, .2]$ under $H_1$, and $[0.01, 0.01, 0.01, 0.01, 0.96]$ under $H_0$. We set
         here $r=0.4.$}
      \label{Fig-corollary}
\end{figure}

 Corollary~\ref{Corollary-main-result} states that, when the network connectivity
 $|\log r|$ is above a threshold, the distributed detector in \eqref{eqn-running-cons-sensor-i} and~\eqref{eq-distributed-dec-rule}
  is asymptotically equivalent to the optimal centralized detector. The corresponding optimal detector threshold is $\gamma=0$.
When $|\log r|$ is below the threshold, Corollary~\ref{Corollary-main-result} determines
   what value of the error exponent the distributed detector can achieve, for any given $\gamma \in (\gamma_0,\gamma_1)$. Moreover, Corollary~\ref{Corollary-main-result} finds the optimal detector threshold $\gamma^\star$ for a given $r$; $\gamma^\star$ can be found as the unique zero of the strictly decreasing function $\Delta_B(\gamma):=B_1(\gamma)-B_0(\gamma)$ on $\gamma\in (\gamma_0,\gamma_1)$, see the proof of Corollary~\ref{Corollary-main-result}, e.g., by bisection on $(\gamma_0,\gamma_1)$.

\mypar{Remark} When
$\Lambda_0(\lambda)=\Lambda_0(1-\lambda)$,
for $\lambda \in [0,1]$,
it can be shown that $\gamma_0 = -\gamma_1<0$,
and $B_0(\gamma)=B_1(-\gamma)$,
for all $\gamma \in (\gamma_0,\gamma_1)$.
This implies that the point
$\gamma^\star$ at which $B_0$ and $B_1$ are equal
is necessarily zero, and hence the optimal
detector threshold $\gamma^\star=0$, irrespective of
the network connectivity $|\log r|$ (even when $|\log r|<\mathrm{thr}(\Lambda_0,N)$.)
 This symmetry holds, e.g., for the Gaussian and Laplace distribution
 considered in Section \ref{sec-Examples}.

   Corollary~\ref{Corollary-main-result} establishes that
    there exists a ``sufficient'' connectivity, say $|\log r^\star|$,
    so that further improvement on the connectivity (and further spending of resources, e.g., transmission power)
     does not lead to a pay off in terms of detection performance. Hence,
    Corollary~\ref{Corollary-main-result} is valuable
    in the practical design of a sensor network,
    as it says how much connectivity (resources) is sufficient to achieve asymptotically optimal detection.

    Equation~\eqref{eqn-rate} says that the distribution of the sensor observations
    (through LMGF) plays a role in determining the performance of distributed detection.
    We illustrate and explain by examples this effect in Section~\ref{sec-Examples}.

\subsection{Proofs of the main results}
\label{subsect-proofs}
%
We first prove Theorem~\eqref{Theorem-rate-bounds-for-alpha-and-beta}.
\begin{proof}[Proof of Theorem \ref{Theorem-rate-bounds-for-alpha-and-beta}]
Consider the probability of false alarm $\alpha_i(k,\gamma)$ in~\eqref{eqn-P-fa}.
 We upper bound $\alpha_i(k,\gamma)$ using the exponential Markov inequality~\cite{Karr}
 parameterized by $\zeta \geq 0$:
 \begin{eqnarray}
 \label{eq-Exp-Markov-inequality}
 \alpha_i(k,\gamma) =  \mathbb{P} \left( x_i(k) \geq \gamma\,|\,H_0 \right)
 =  \mathbb{P} \left( e^{\zeta x_i(k)} \geq e^{\zeta \gamma}\,|\,H_0 \right)
 \leq \E\left[e^{\zeta x_i(k)}|H_0\right] e^{-\zeta \gamma}.
 \end{eqnarray}
Next, by setting $\zeta = N\,k\,\lambda$, with $\lambda \geq 0$, we obtain:
\begin{eqnarray}
\label{eq-Exp-Markov}
\alpha_i(k,\gamma)& \leq & \E\left[e^{Nk\lambda x_i(k)}|H_0\right] e^{-N k \lambda \gamma} \\
\label{eq-exp-markov-2}
&=&\E\left[e^{N\lambda  \sum_{t=1}^k\sum_{j=1}^N \Phi_{i,j}(k,t) L_j(t)}|H_0\right]e^{-N k \lambda \gamma} .
\end{eqnarray}
The terms in the sum in the exponent in~\eqref{eq-exp-markov-2} are conditionally independent, given the realizations of the averaging matrices $W(t)$, $t=1,\ldots,k$, Thus, by iterating the expectations, and using the
definition of $\Lambda_0$ in~\eqref{eq-log-momentgen-fcn}, we compute the expectation in~\eqref{eq-exp-markov-2} by
conditioning first on $W(t)$, $t=1,\ldots,k$:
\begin{eqnarray}
\label{eq-Conditioning-on-Wl}
\E\left[e^{N\lambda  \sum_{t=1}^k\sum_{j=1}^N \Phi_{i,j}(k,t) L_j(t)}|H_0\right]&=&\E\left[\E\left[e^{N\lambda  \sum_{t=1}^k\sum_{j=1}^N \Phi_{i,j}(k,t) L_j(t)}|H_0,W(1),\ldots,W(k)\right]\right]\nonumber \\
&=& \E\left[e^{\sum_{t=1}^k\sum_{j=1}^N \Lambda_0\left(N\lambda \Phi_{i,j}(k,t)\right)}\right].
\end{eqnarray}

\mypar{Partition of the sample space} We handle the random matrix realizations $W(t)$, $t=1,\ldots,k$,
through a suitable partition of the underlying probability space. Adapting an argument from~\cite{GaussianDD},
partition the probability
space \emph{based on the time of the last successful averaging}. In more detail,
for a fixed $k$, introduce the partition $\mathcal {P}_k$ of the sample space that
consists of the disjoint events $\mathcal {A}_{s,k}$, $s=0,1,...,k$, given by:
\begin{equation*}
\mathcal {A}_{s,k} = \left\{\|\Phi(k,s)-J\|\leq \epsilon \; \mathrm{and}\; \|\Phi(k,s+1)-J\|>\epsilon  \right\},
\end{equation*}
for $s=1,...,k-1$, $\mathcal {A}_{0,k}=\{\|\Phi(k,1)-J\|>\epsilon \}$,
and ${A}_{k,k}=\left\{\|\Phi(k,k)-J\|\leq \epsilon\right\}$. For simplicity of notation, we drop the index $k$ in the sequel and denote event $\mathcal A_{s,k}$ by $\mathcal A_s$, $s=0,\ldots,k$.
%
%
for $\epsilon>0$. Intuitively,
the smaller $t$ is, the closer
the product $\Phi(k,t)$ to $J$ is; if the event $\mathcal{A}_s$ occurred,
then the
largest $t$ for which
the product $\Phi(k,t)$ is still $\epsilon$-close to
$J$ equals $s$. We now show that $\mathcal{P}_k$ is indeed a partition.
We need the following simple Lemma. The Lemma
shows that convergence of $\Phi(k,s)-J$ is monotonic, for any
realization of the matrices $W(1),W(2),...,W(k).$
 \begin{lemma}
 \label{lemma-new-aux}
 Let Assumption~\ref{assumption-W(k)} hold. Then, for any realization of
 the matrices $W(1),...,W(k)$:
\[
\|\Phi(k,s) - J\| \leq \| \Phi(k,t) - J\|,\:\mathrm{for\,\,} 1 \leq s \leq t \leq k.
\]
\end{lemma}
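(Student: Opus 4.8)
The plan is to reduce the monotonicity claim to a single algebraic factorization, combined with the fact that a product of symmetric stochastic matrices is nonexpansive relative to $J$ in the spectral norm. First I would record the structural consequence of Assumption~\ref{assumption-W(k)}(b): each $W(t)$ is symmetric and stochastic, hence \emph{doubly} stochastic, so that $W(t)1 = 1$ and $1^\top W(t) = 1^\top$. Since a product of doubly stochastic matrices is again doubly stochastic, every partial product $\Phi(k,t)$ and $\Phi(t-1,s)$ is doubly stochastic. In particular, writing $J = (1/N)11^\top$ and using column stochasticity, I get the absorbing identity $J\,\Phi(t-1,s) = J$, which is the only consequence of double stochasticity that the factorization below actually needs.

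The case $s = t$ is trivial (the inequality becomes an equality), so assume $s < t$. Then I would split the product as $\Phi(k,s) = \Phi(k,t)\,\Phi(t-1,s)$ and subtract $J$ in its factored form $J = J\,\Phi(t-1,s)$:
\[
\Phi(k,s) - J = \Phi(k,t)\Phi(t-1,s) - J\,\Phi(t-1,s) = \bigl(\Phi(k,t) - J\bigr)\Phi(t-1,s).
\]
Taking spectral norms and applying submultiplicativity then yields
\[
\|\Phi(k,s)-J\| \;\leq\; \|\Phi(k,t)-J\|\;\|\Phi(t-1,s)\|.
\]

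It remains to show $\|\Phi(t-1,s)\| \leq 1$, which I regard as the crux of the argument. Here I would again invoke that $\Phi(t-1,s)$ is doubly stochastic and nonnegative, so its maximum row sum and maximum column sum both equal $1$; that is, $\|\Phi(t-1,s)\|_\infty = \|\Phi(t-1,s)\|_1 = 1$. The standard interpolation bound $\|A\| \leq \sqrt{\|A\|_1\,\|A\|_\infty}$ then forces $\|\Phi(t-1,s)\| \leq 1$, and substituting this into the displayed inequality completes the proof. The only place where the symmetry/double-stochasticity hypothesis is essential is precisely this spectral-norm estimate; everything else is the bookkeeping identity $\Phi(k,s) = \Phi(k,t)\Phi(t-1,s)$ together with the absorption of $J$.
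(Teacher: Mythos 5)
Your proof is correct, and it takes a genuinely different route from the paper's at the key step. The paper telescopes the whole product, writing $\Phi(k,s)-J=(W(k)-J)(W(k-1)-J)\cdots(W(s)-J)$ (valid because each $W(\cdot)$ is doubly stochastic, so $W(\cdot)J=JW(\cdot)=J^2=J$), applies submultiplicativity, and then bounds each individual factor by $\|W(t)-J\|\leq 1$ via the eigendecomposition of the symmetric matrix $W(t)$, whose eigenvalues lie in $[-1,1]$. You instead factor only once, $\Phi(k,s)-J=\bigl(\Phi(k,t)-J\bigr)\Phi(t-1,s)$, and bound the entire tail product by $\|\Phi(t-1,s)\|\leq\sqrt{\|\Phi(t-1,s)\|_{1}\,\|\Phi(t-1,s)\|_{\infty}}=1$, using only nonnegativity and double stochasticity. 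Your norm estimate is thus slightly more general --- it would survive without symmetry of the $W(t)$'s, needing only that they be doubly stochastic and nonnegative (symmetry enters your argument only through the paper's convention that makes the products doubly stochastic in the first place) --- whereas the paper's eigenvalue argument leans on symmetry but yields the stronger per-factor fact $\|W(t)-J\|\leq 1$, which is the same contraction estimate reused implicitly in Lemma~6(b). Both arguments are complete; your interpolation step $\|A\|\leq\sqrt{\|A\|_{1}\|A\|_{\infty}}$ is standard (e.g., from $\|A\|^{2}=\rho(A^{\top}A)\leq\|A^{\top}A\|_{\infty}\leq\|A\|_{1}\|A\|_{\infty}$), and your handling of the trivial case $s=t$ and of the identity $J\Phi(t-1,s)=J$ is correct.
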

\begin{proof} Since every realization of $W(t)$ is stochastic and symmetric for every $t$, we have that $W(t)1=1$ and $1^\top W(t)=1^\top$, and, so: $\Phi(k,s) - J=W(k)\cdots W(s)-J=(W(k)-J)\cdots(W(s)-J) $. Now, using the sub-multiplicative property of the spectral norm, we get
\begin{align*}
\|\Phi(k,s) - J\|&=\|(W(k)-J)\cdots(W(t)-J)(W(t-1)-J)\cdots(W(s)-J)\|\\
&\leq \|(W(k)-J)\cdots(W(t)-J)\|\|(W(t-1)-J)\|\cdots\|(W(s)-J)\|.
\end{align*}
To prove Lemma~\ref{lemma-new-aux}, it remains to show that $\|W(t)-J\|\leq 1$, for any realization of $W(t)$. To this end, fix a realization $W$ of $W(t)$. Consider the eigenvalue decomposition $W=QMQ^\top$, where $M=\mathrm{diag}(\mu_1,\ldots,\mu_N)$ is the matrix of eigenvalues of $W$, and the columns of $Q$ are the orthonormal eigenvectors.
As $\frac{1}{\sqrt{N}}1$ is the eigenvector associated with eigenvalue $\mu_1=1$,
we have that $W - J = Q M^\prime Q^\top,$
 where $M=\mathrm{diag}(0,\mu_2,\ldots,\mu_N)$. Because
 $W$ is stochastic, we know that $1=\mu_1 \geq \mu_2 \geq ... \geq \mu_N\geq -1$,
 and so $\|W-J\|  = \max\{|\mu_2|,|\mu_N|\} \leq 1.$
\end{proof}
To show that $\mathcal{P}_k$ is a partition, note first
that (at least) one of the events $\mathcal{A}_0,...,\mathcal{A}_k$ necessarily
occurs. It remains to show that the events $\mathcal{A}_s$ are disjoint. We
carry out this by fixing arbitrary $s=1,...,k$, and showing that, if the event $\mathcal{A}_s$
  occurs, then $\mathcal{A}_t$, $t \neq s$, does not occur.
  Suppose that $\mathcal{A}_s$ occurs, i.e.,
  the realizations $W(1),...,W(k)$ are such that
  $\|\Phi(k,s)-J\|\leq \epsilon$ and $\|\Phi(k,s+1)-J\|>\epsilon$.
  Fix any $t>s.$ Then, event $\mathcal{A}_t$
   does not occur, because, by Lemma \ref{lemma-new-aux},
   $\|\Phi(k,t)-J\| \geq \|\Phi(k,s+1)-J\|>\epsilon.$ Now, fix any
   $t<s.$ Then, event $\mathcal{A}_t$ does not occur,
   because, by Lemma \ref{lemma-new-aux}, $\|\Phi(k,t+1) - J\| \leq \|\Phi(k,s) - J\| \leq \epsilon.$ Thus,
   for any $s=1,...,k$, if the event $\mathcal{A}_s$ occurs,
   then $\mathcal{A}_t$, for $t \neq s$, does not occur, and hence the events
   $\mathcal{A}_s$ are disjoint.

Using the
total probability law over $\mathcal {P}_k$, the expectation~\eqref{eq-Conditioning-on-Wl} is computed by:
\begin{eqnarray}
\label{eq-Condition-on-As}
\E\left[e^{\sum_{t=1}^k\sum_{j=1}^N \Lambda_0\left(N\lambda \Phi_{i,j}(k,t)\right)}\right]=
\sum_{s=0}^{k}\E\left[e^{\sum_{t=1}^k\sum_{j=1}^N \Lambda_0\left(N\lambda \Phi_{i,j}(k,t)\right)} \, \mathcal{I}_{\mathcal{A}_s}\right],
\end{eqnarray}
where, we recall, $\mathcal{I}_{\mathcal{A}_s}$ is the indicator function of the event $\mathcal{A}_s$.
The following lemma explains how to use the partition $\mathcal {P}_k$ to upper bound the expectation in~\eqref{eq-Condition-on-As}.
\begin{lemma}\label{lemma-Bounds-on-Lambda}
Let Assumptions 1-3 hold. Then:
\begin{enumerate}[(a)]
\item \label{lemma-Bounds-on-Lambda-part-b}
For any realization of the random matrices $W(t)$, $t=1,2,...,k$:
  \[
  \sum_{j=1}^N\Lambda_0\left(N\lambda \Phi_{i,j}(k,t)\right)\leq  \Lambda_0\left(N\lambda\right),
  \: \: \forall t=1, \ldots, k.\]
\item \label{lemma-Bounds-on-Lambda-part-a}
Further, consider a fixed $s$ in $\{0,1,...,k\}$. If the
event $\mathcal{A}_s$ occurred, then, for $i=1,\ldots,N$:
$
\Lambda_0\left(N\lambda \Phi_{i,j}(k,t)\right)\leq \max\left( \Lambda_0\left(\lambda-\epsilon N \sqrt {N}\lambda\right), \Lambda_0\left(\lambda+\epsilon N \sqrt {N}\lambda\right) \right),\: \forall t=1,\ldots,s, \,\forall j=1,\ldots,N.$
\end{enumerate}
\end{lemma}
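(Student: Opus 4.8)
The plan is to dispatch the two parts separately, using convexity of $\Lambda_0$ (Lemma~\ref{lemma-log-momentgen-fcns}, part~(a)) as the common engine, together with two structural facts: $\Lambda_0(0)=0$ (part~(b) of the same lemma) and the row-stochasticity of $\Phi(k,t)$, which holds because $\Phi(k,t)$ is a product of the stochastic matrices $W(\cdot)$, so $\Phi_{ij}(k,t)\in[0,1]$ and $\sum_{j=1}^N\Phi_{ij}(k,t)=1$. Throughout I work in the regime $\lambda\ge 0$, which is exactly the range in which the lemma is invoked (recall the substitution $\zeta=Nk\lambda$, $\lambda\ge 0$, in~\eqref{eq-Exp-Markov}); finiteness of every quantity below is guaranteed by Assumption~\ref{ass_finite_LMGF}.

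For part~(a), I would write each argument as a convex combination of $N\lambda$ and $0$. Since $\Phi_{ij}(k,t)\in[0,1]$ and $\lambda\ge0$, we have $N\lambda\,\Phi_{ij}(k,t)=\Phi_{ij}(k,t)\cdot(N\lambda)+(1-\Phi_{ij}(k,t))\cdot 0$, so convexity of $\Lambda_0$ gives $\Lambda_0(N\lambda\,\Phi_{ij})\le \Phi_{ij}\,\Lambda_0(N\lambda)+(1-\Phi_{ij})\,\Lambda_0(0)=\Phi_{ij}\,\Lambda_0(N\lambda)$, where I used $\Lambda_0(0)=0$. Summing over $j$ and applying $\sum_{j}\Phi_{ij}(k,t)=1$ collapses the right-hand side to $\Lambda_0(N\lambda)$, which is the claim.

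For part~(b), the first step is to propagate the event $\mathcal{A}_s$ backward in time. If $\mathcal{A}_s$ occurs then $\|\Phi(k,s)-J\|\le\epsilon$, and Lemma~\ref{lemma-new-aux} (monotonicity of the distance to $J$) upgrades this to $\|\Phi(k,t)-J\|\le\epsilon$ for every $t\le s$. I would then convert this spectral-norm bound into an entrywise one via $|\Phi_{ij}(k,t)-1/N|=|e_i^\top(\Phi(k,t)-J)e_j|\le\|\Phi(k,t)-J\|\le\epsilon$, so that, writing $N\lambda\,\Phi_{ij}=\lambda\,(1+N(\Phi_{ij}-1/N))$ and using $\lambda\ge0$, the argument lies in $[\lambda-\epsilon N\lambda,\ \lambda+\epsilon N\lambda]\subseteq[\lambda-\epsilon N\sqrt N\lambda,\ \lambda+\epsilon N\sqrt N\lambda]$. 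Finally, since a convex function attains its maximum over a closed interval at an endpoint, $\Lambda_0(N\lambda\,\Phi_{ij})\le\max\{\Lambda_0(\lambda-\epsilon N\sqrt N\lambda),\ \Lambda_0(\lambda+\epsilon N\sqrt N\lambda)\}$, uniformly in $i$, in $j=1,\ldots,N$, and in $t=1,\ldots,s$.

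Neither part presents a genuine obstacle; the only point needing care is the norm conversion in part~(b). The clean entrywise estimate already yields the sharper interval half-width $\epsilon N\lambda$, so the stated factor $\epsilon N\sqrt N\lambda$ is merely a looser (hence still valid) bound that presumably leaves slack for a later step — I would flag this and keep the tighter $\epsilon$ bound on hand in case a subsequent argument benefits from it. I would also double-check that multiplying the centered interval by $\lambda\ge0$ preserves the ordering of the endpoints, and note that the same convexity-at-endpoints reasoning transfers verbatim to the probability-of-miss analysis under $H_1$ (with $\lambda\le0$ and $\Lambda_1$ in place of $\Lambda_0$), where I expect the analogous estimate to be used.
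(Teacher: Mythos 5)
Your proposal is correct and follows essentially the same route as the paper: part~(a) rests on convexity of $\Lambda_0$, $\Lambda_0(0)=0$, and row-stochasticity of $\Phi(k,t)$ (you apply Jensen's inequality entrywise and sum, whereas the paper equivalently observes that the maximum of the convex map $a\mapsto\sum_j\Lambda_0(N\lambda a_j)$ over the simplex is attained at a vertex), and part~(b) uses the same chain of monotonicity of $\|\Phi(k,\cdot)-J\|$, conversion of the spectral-norm bound to an entrywise bound, and attainment of the maximum of a convex function on an interval at an endpoint. Your observation that the entrywise bound $|\Phi_{ij}(k,t)-1/N|\le\|\Phi(k,t)-J\|\le\epsilon$ makes the factor $\sqrt{N}$ superfluous is accurate; the paper's looser constant is harmless since the subsequent step takes $\epsilon\downarrow 0$ via Lemma~\ref{lemma-inf-f-epsilon}.
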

\begin{proof}
To prove part~\eqref{lemma-Bounds-on-Lambda-part-b} of the Lemma, by convexity of $\Lambda_0$, the maximum of $\sum_{j=1}^N \Lambda_0(N\lambda a_j)$ over the simplex $\left\{a \in \mathbb {R}^N:\, \sum_{j=1}^N a_j=1,\, a_j\geq 0, \, j=1,\ldots,N  \right\}$ is achieved at a corner point of the simplex.
  The maximum equals:
  $
  \Lambda_0(N\lambda)+(N-1) \Lambda_0(0) = \Lambda_0(N\lambda),
  $
where we use the property
 from Lemma~\ref{lemma-log-momentgen-fcns}, part~(b), that $\Lambda_0(0)=0$. Finally, since for any realization of the matrices $W(1),\ldots, W(k)$, the set of entries $\left\{\Phi_{i,j}(k,t): j=1,\ldots,N\right\}$ is a point in the simplex, the claim of part~\eqref{lemma-Bounds-on-Lambda-part-b} of the Lemma follows.

To prove part~\eqref{lemma-Bounds-on-Lambda-part-a} of the Lemma, suppose that event $\mathcal{A}_s$ occurred. Then, by the definition of $\mathcal{A}_s$,
\[
\|\Phi(k,s)-J\|=\|W(k)\cdot \ldots \cdot W(s)-J\|\leq \epsilon.
\]
Using the fact that each realization $W(t)$, $t=1,2,\ldots$, is doubly stochastic, and using the sub-multiplicative property of the spectral norm, we have that
\[
\|\Phi(k,t)-J\|=\|W(k)\cdot \ldots \cdot W(t)-J\|\leq \epsilon,
 \]
 for every $t\leq s$. Then, by the equivalence of the 1-norm and the spectral norm, it follows that:
\begin{equation*}
\left|\Phi_{i,j}(k,t)-\frac{1}{N}\right|\leq \sqrt {N}\epsilon, \;\mathrm{for}\; t=1,\ldots,s,\,\, \mathrm{for\,\,all}\,\,i,j=1,\ldots,N.
\end{equation*}
Finally, since $\Lambda_0$ is convex (Lemma~\ref{lemma-log-momentgen-fcns}, part~(a)), its maximum in~$\left[ \lambda - \epsilon N \sqrt {N}\lambda, \lambda + \epsilon N \sqrt {N}\lambda\right]$ is attained at a boundary point and the claim follows.
\end{proof}
We now fix $\delta \in (0, |\log r|)$. Using the results from Lemma~\ref{lemma-conv-of-Phi} and Lemma~\ref{lemma-Bounds-on-Lambda}, we next bound the expectation in~\eqref{eq-Condition-on-As} as follows:
\begin{eqnarray}
\sum_{s=0}^{k}\E\left[e^{\sum_{t=1}^k\sum_{j=1}^N \Lambda_0\left(N\lambda \Phi_{i,j}(k,t)\right)} \,
\mathcal{I}_{\mathcal{A}_s} \right]
&\leq&
 \sum_{s=0}^{k} \left( e^{sN \max\left( \Lambda_0\left(\lambda-\epsilon N \sqrt {N}\lambda\right), \Lambda_0\left(\lambda+\epsilon N \sqrt {N}\lambda\right)
 \right)+(k-s)\Lambda_0(N\lambda)} \right) \nonumber\\
 \label{eq-Main-bound}
 &\times& \left( C(\delta) e^{-(k-(s+1)) (|\log r| - \delta)}\right).
\end{eqnarray}
To simplify the notation, we introduce the function:
\begin{equation}
\label{eq-f-epsilon}
g_0: \mathbb {R}^2\longrightarrow \mathbb{R}, \;g_0( \epsilon, \lambda):=\max\left( \Lambda_0\left(\lambda-\epsilon N \sqrt {N}\lambda\right), \Lambda_0\left(\lambda+\epsilon N \sqrt {N}\lambda\right)
 \right).
\end{equation}
We need the following property of $g_0(\cdot,\cdot)$.
\begin{lemma}\label{lemma-inf-f-epsilon}
Consider $g_0(\cdot,\cdot)$ in~\eqref{eq-f-epsilon}. Then, for every $\lambda \in \mathbb R$, the following holds:
\begin{equation*}
\label{eq-inf-f-epsilon}
\inf_{\epsilon>0} g_0(\epsilon,\lambda)=\Lambda_0(\lambda).
\end{equation*}
\end{lemma}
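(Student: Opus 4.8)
The plan is to prove the stated identity by a two-sided argument: show that $\Lambda_0(\lambda)$ is a lower bound for $g_0(\epsilon,\lambda)$ uniformly over $\epsilon>0$, and then show that this lower bound is attained in the limit $\epsilon\to 0^+$. The only ingredients needed are the convexity of $\Lambda_0$ (Lemma~\ref{lemma-log-momentgen-fcns}, part~(a)) and its finiteness everywhere (Assumption~\ref{ass_finite_LMGF}).

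First I would fix $\epsilon>0$ and observe that the two arguments appearing in $g_0$, namely $\lambda-\epsilon N\sqrt{N}\lambda$ and $\lambda+\epsilon N\sqrt{N}\lambda$, are symmetric about $\lambda$; that is, $\lambda$ is exactly their midpoint, regardless of the sign of $\lambda$ (for $\lambda\le 0$ the two points merely swap which one lies to the left, but $\lambda$ is still their average). Applying the midpoint inequality for the convex function $\Lambda_0$ then gives
\[
\Lambda_0(\lambda)\leq \tfrac{1}{2}\left[\Lambda_0\!\left(\lambda-\epsilon N\sqrt{N}\lambda\right)+\Lambda_0\!\left(\lambda+\epsilon N\sqrt{N}\lambda\right)\right]\leq g_0(\epsilon,\lambda),
\]
where the last step uses that the average of two numbers never exceeds their maximum. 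Since this holds for every $\epsilon>0$, taking the infimum yields $\inf_{\epsilon>0} g_0(\epsilon,\lambda)\geq \Lambda_0(\lambda)$.

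For the reverse inequality I would invoke continuity of $\Lambda_0$. By Assumption~\ref{ass_finite_LMGF}, $\Lambda_0$ is finite on all of $\mathbb{R}$, and a convex function that is finite on an open interval is continuous there; hence $\Lambda_0$ is continuous on $\mathbb{R}$. As $\epsilon\to 0^+$, both arguments $\lambda\pm\epsilon N\sqrt{N}\lambda$ converge to $\lambda$, so by continuity $\Lambda_0(\lambda\pm\epsilon N\sqrt{N}\lambda)\to \Lambda_0(\lambda)$, and therefore $g_0(\epsilon,\lambda)=\max\{\cdots\}\to \Lambda_0(\lambda)$. This shows $\inf_{\epsilon>0} g_0(\epsilon,\lambda)\leq \Lambda_0(\lambda)$, and combining the two bounds gives the claimed equality.

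The main point to flag is that the second (upper) inequality rests crucially on continuity, which is \emph{not} automatic for a general convex function: such a function may be discontinuous at the boundary of its effective domain, in which case the limit along the perturbed points could exceed $\Lambda_0(\lambda)$. It is precisely Assumption~\ref{ass_finite_LMGF}, the finiteness of $\Lambda_0$ on all of $\mathbb{R}$, that forces the point $\lambda$ to lie in the interior of the domain and thereby guarantees continuity. The lower bound, by contrast, uses only convexity and holds without any finiteness hypothesis.
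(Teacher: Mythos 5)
Your proof is correct. It differs from the paper's in its decomposition: the paper first rewrites $g_0(\epsilon,\lambda)$ as $\max_{\delta\in[-\epsilon,\epsilon]}\Lambda_0(\lambda+\delta N\sqrt{N}\lambda)$ (valid because a convex function on an interval attains its maximum at an endpoint), deduces from this that $g_0(\cdot,\lambda)$ is monotone in $\epsilon$, and then simply states that "the claim follows" --- leaving implicit both the lower bound (the interval contains $\delta=0$, so the max is at least $\Lambda_0(\lambda)$) and the limiting step as $\epsilon\to 0^{+}$. You instead argue two-sidedly: the lower bound via the midpoint inequality $\Lambda_0(\lambda)\leq\tfrac12[\Lambda_0(\lambda-\epsilon N\sqrt{N}\lambda)+\Lambda_0(\lambda+\epsilon N\sqrt{N}\lambda)]\leq g_0(\epsilon,\lambda)$, and the upper bound by continuity of $\Lambda_0$, which you correctly justify from Assumption~\ref{ass_finite_LMGF} (a convex function finite on all of $\mathbb{R}$ is continuous). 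Your version is slightly longer but more self-contained: it makes explicit the continuity hypothesis that the paper's terse conclusion silently relies on, and your remark that this is exactly where finiteness of the LMGF enters is a genuine point that the paper does not surface. The paper's route, in exchange, yields the additional (and later unused) fact that $g_0(\cdot,\lambda)$ is monotone, which is why its infimum equals its limit at $0^{+}$.
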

\begin{proof}
Since $\Lambda_0(\cdot)$ is convex, for $\epsilon'<\epsilon$ and for fixed $\lambda$, we have that \[g_0(\epsilon,\lambda)=\max_{\delta \in [-\epsilon, \epsilon]} \Lambda_0\left(\lambda+\delta N \sqrt {N}\lambda\right)\geq \max_{\delta \in [-\epsilon', \epsilon']} \Lambda_0\left(\lambda+\delta N \sqrt {N}\lambda\right)=g_0(\epsilon',\lambda).\] Thus, for fixed $\lambda$, $f(\cdot,\lambda)$ is non-increasing, and the claim of the Lemma follows.
\end{proof}
We proceed by bounding further the right hand side in~\eqref{eq-Main-bound},
 by rewriting $e^{-(k-(s+1))(|\log r|-\delta )}$ as $\frac{1}{r e^\delta}\,e^{-(k-s)(|\log r|-\delta)}$:
\begin{eqnarray}
\label{eq-Main-bound2}
&\,& \sum_{s=0}^{k}\frac{C(\delta)}{r e^\delta}\, e^{sN g_0(\epsilon,\lambda)\,+\, (k-s)\Lambda_0(N\lambda)\,-\,(k-s)(|\log r|-\delta)}
\nonumber\\
&\leq&
(k+1) \, \max_{s\in \{0,\ldots,k\}} \frac{C(\delta)}{r e^\delta}
e^{ \left[ \,sN g_0(\epsilon,\lambda) \,+\, (k-s)\, \left( \Lambda_0(N\lambda)- (|\log r|-\delta) \right) \,\right] }\nonumber\\
&=&
(k+1) \, \frac{C(\delta)}{r e^\delta}  \,e^{ \max_{s\in \{0,\ldots,k\}}
\left[ \,s N g_0(\epsilon,\lambda)\,+\,(k-s) \left(  \Lambda_0(N\lambda)- (|\log r|-\delta) \right) \, \right]}\nonumber\\
&\leq &
(k+1) \, \frac{C(\delta)}{r e^\delta}  \,e^{ k \max_{\theta \in[0,1]}
\left[ \,
\theta N g_0(\epsilon,\lambda)\,+\,(1-\theta)\left(\Lambda_0(N\lambda)- (|\log r|-\delta)\right)
\right]}\nonumber\\
\label{eqn-corrected}
&=&  (k+1)\, \frac{C(\delta)}{re^\delta} \, e^{ k \, \left[ \,\left(N g_0(\epsilon,\lambda), \Lambda_0(N\lambda)- (|\log r|-\delta) \right)\,\right]}.
\end{eqnarray}
The second inequality follows by introducing $\theta:=\frac{s}{k}$ and by enlarging the set for $\theta$ from $\left\{0,\frac{1}{k},\ldots, 1\right\}$ to the continuous interval $[0,1]$.
Taking the $\log$ and dividing by $k$, from~\eqref{eq-Exp-Markov} and~\eqref{eqn-corrected} we get:
\begin{eqnarray}
\label{eq-Main-bound3}
\frac{1}{k}\log \alpha_i(k,\gamma)& \leq & \frac {\log (k+1)}{k}+\frac {\log \frac{C(\delta)}{r e^\delta}}{k} + \max \left\{N g_0(\epsilon,\lambda), \Lambda_0(N\lambda)- (|\log r|-\delta)\right\}  - N\gamma \lambda.
\end{eqnarray}
Taking the $\limsup$ when $k \rightarrow \infty$, the first two terms in the right hand side of~\eqref{eq-Main-bound3} vanish;
  further, changing the sign, we get a bound on the exponent of $\alpha_i(k)$ that holds for every $\epsilon>0$:
\begin{eqnarray*}
\label{eq-Main-bound4}
\liminf -\frac{1}{k}\log \alpha_i(k,\gamma)& \geq &  - \max \left\{ N g_0(\epsilon,\lambda), \, \Lambda_0(N\lambda)- (|\log r|-\delta) \right\} + N\gamma \lambda.
\end{eqnarray*}
By Lemma~\ref{lemma-inf-f-epsilon}, as $\epsilon \rightarrow 0$, $N g_0(\epsilon,\lambda)$ decreases to $N\,\Lambda_0(\lambda)$;
 further, letting $\delta \rightarrow 0$, we get
\begin{eqnarray}
\label{eq-Main-bound5}
\liminf -\frac{1}{k}\log \alpha_i(k,\gamma)& \geq &  - \max \left\{ N \Lambda_0(\lambda), \, \Lambda_0(N\lambda)- |\log r| \right\}
+N\gamma \lambda.
\end{eqnarray}
The previous bound on the exponent of the probability of false alarm holds for any $\lambda\geq 0$. To get the best bound, we maximize the expression on the right hand side of~\eqref{eq-Main-bound5} over $\lambda\in [0,\infty)$.
(We refer to figure \ref{Fig-corollary} to help illustrate the bounds $B_0(\gamma)$ and $B_1(\gamma)$
 for a discrete valued observations $Y_i(t)$ over a 5-point alphabet.) To this end, introduce
\begin{equation}
\label{eqn-Phi-0-def}
\Phi_0(\lambda):=\max \left\{ N \Lambda_0(\lambda), \, \Lambda_0(N\lambda)- |\log r| \right\}.
\end{equation}
We show that the best bound equals $B_0(\gamma)$ in~\eqref{theorem-rate-bound-for-alpha}, i.e.:
\begin{equation}
\label{eq-B0-as-conjugate}
B_0(\gamma)= \max_{\lambda \geq 0}N\gamma \lambda-\Phi_0(\lambda).
\end{equation}
From the first order optimality conditions, for a fixed $\gamma$, an optimizer $\lambda^\star=\lambda^\star(\gamma)$ (if it exists) of the objective in~\eqref{eq-B0-as-conjugate} is a point that satisfies:
\begin{equation}
\label{eq-first-order-opt-condition}
N\gamma \in \partial \Phi_0(\lambda^\star),\;\; \lambda^\star\geq 0,
 \end{equation}
 where $\partial \Phi_0(\lambda)$ denotes the subdifferential set of $\Phi_0$ at $\lambda$. We next characterize $\partial \Phi_0(\lambda)$, for $\lambda\geq 0$. Recall the zero $\lambda_0^{\mathrm{s}}$ of $\Delta_0(\cdot)$ from Theorem~\ref{Theorem-rate-bounds-for-alpha-and-beta}. The subdifferential $\partial \Phi_0(\lambda)$ is:
\begin{equation}
\label{eq-subdifferential-of-Phi_0}
\partial \Phi_0(\lambda)=\left\{ \begin{array}{lll} \{N\Lambda_0^\prime(\lambda)\}, & \mathrm{for} &
\lambda\in [0,\lambda_0^{\mathrm{s}}) \\
\left[N\Lambda_0^\prime(\lambda), N \Lambda_0^\prime(N\lambda)\right] , & \mathrm{for}&    \lambda= \lambda_0^{\mathrm{s}} \\
\{N\Lambda_0^\prime(N\lambda)\}, & \mathrm{for} &    \lambda> \lambda_0^{\mathrm{s}}. \end{array} \right.
\end{equation}
We next find $B_0(\gamma)$ for any $\gamma\in(\gamma_0,\gamma_1)$, by finding $\lambda^\star=\lambda^\star(\gamma)$ for any $\gamma\in(\gamma_0,\gamma_1)$. Recall $\gamma_0^{-}$ and $\gamma_0^{+}$ from Theorem~\ref{Theorem-rate-bounds-for-alpha-and-beta}. We separately consider three regions: 1) $\gamma\in [\gamma_0,\gamma_0^{-}]$; 2) $\gamma\in (\gamma_0^{-},\gamma_0^{+})$;
and 3) $\gamma\in [\gamma_0^{+},\gamma_1]$. For the first region, recall that $\Lambda_0^\prime(0)=\gamma_0$, i.e., for $\gamma=\gamma_0$, equation~\eqref{eq-first-order-opt-condition} holds (only) for $\lambda^\star=0$. Also, for $\gamma=\gamma_0^{-}$, we have $\Lambda_0^\prime(\lambda_0^{\mathrm{s}})=\gamma_0^{-}$, i.e., equation~\eqref{eq-first-order-opt-condition} holds (only) for $\lambda^\star=\lambda_0^{\mathrm{s}}$. Because $\Lambda_0^\prime(\lambda)$ is continuous and strictly increasing on $\lambda \in [0,\lambda_0^{\mathrm{s}}]$, it follows that, for any $\gamma \in [\gamma_0,\gamma_0^{-}]$ there exists a solution to~\eqref{eq-first-order-opt-condition}, it is unique, and lies in $[0,\lambda_0^{\mathrm{s}}]$. Now, we calculate $B_0(\gamma)$:
\begin{align}
\label{eq-calculate-B_0-region-1}
B_0(\gamma)&=N\lambda^\star \gamma-\Phi_0(\lambda^\star)= N\lambda^\star \gamma-N\Lambda_0(\lambda^\star) \\
&=N (\lambda^\star\gamma- \Lambda_0(\lambda^\star))=N \sup_{\lambda\geq 0} (\lambda \gamma- \Lambda_0(\lambda))=NI_0(\gamma),
\end{align}
where we used the fact that $\Phi_0(\lambda^\star)=N\Lambda_0(\lambda^\star)$ (because $\lambda^\star\leq \lambda_0^{\mathrm{s}}$),
 and the definition of the function $I_0(\cdot)$ in~\eqref{eqn-I-0-def}.
We now consider the second region. Fix $\gamma \in(\gamma_0^{-}, \gamma_0^{+})$. It is trivial to verify, from~\eqref{eq-subdifferential-of-Phi_0}, that $\lambda^\star=\lambda_0^{\mathrm{s}}$ is the solution to~\eqref{eq-first-order-opt-condition}. Thus, we calculate $B_0(\gamma)$ as follows:
\begin{align}
\label{eq-calculate-B_0-region-2}
B_0(\gamma)&=N\lambda_0^{\mathrm{s}} \gamma-\Phi_0(\lambda_0^{\mathrm{s}})=
N\lambda_0^{\mathrm{s}} \gamma-N\Lambda_0(\lambda_0^{\mathrm{s}}) \\
&=N\lambda_0^{\mathrm{s}} (\gamma-\gamma_0^{-}) + N\lambda_0^{\mathrm{s}} \gamma_0^{-} -N\Lambda_0(\lambda_0^{\mathrm{s}})=
N\lambda_0^{\mathrm{s}} (\gamma-\gamma_0^{-}) + N I_0(\gamma_0^{-}),
\end{align}
where we used the fact that $\lambda_0^{\mathrm{s}} \gamma_0^{-} -\Lambda_0(\lambda_0^{\mathrm{s}})=
 \sup_{\lambda\geq 0} \lambda \gamma_0^{-}- \Lambda_0(\lambda)=I_0(\gamma_0^{-})$. The proof for the third region is analogous to the proof for the first region.

For a proof of the claim on the probability of miss $\beta_i(k,\gamma)=\mathbb{P} \left(x_i(k)< \gamma|H_1\right)$, we proceed analogously to~\eqref{eq-Exp-Markov-inequality}, where instead of $\zeta\geq 0$, we now use $\zeta\leq 0$ (and, hence, the proof proceeds with $\lambda \leq 0$).
\end{proof}
%
%
%
%
%
%
\begin{proof}[Proof of Corollary~\ref{Corollary-main-result}]
We first prove part (a). Consider the error probability
$P_{\mathrm{e},i}(k,\gamma)$ in~\eqref{eqn-P-fa}.
By Lemma 1.2.15 in \cite{DemboZeitouni}, we have that:
\begin{align*}
\liminf_{k \rightarrow \infty} -\frac{1}{k} \log P_{\mathrm{e},i}(k,\gamma)
 &= \min \left\{ \liminf_{k \rightarrow \infty} -\frac{1}{k} \log (\alpha_i(k,\gamma) \pi_0),
  \liminf_{k \rightarrow \infty} -\frac{1}{k} \log (\beta_i(k,\gamma)\pi_1) \right\}\\
  &=\min \left\{ \liminf_{k \rightarrow \infty} -\frac{1}{k} \log \alpha_i(k,\gamma) ,
  \liminf_{k \rightarrow \infty} -\frac{1}{k} \log \beta_i(k,\gamma) \right\}\\
  & \geq     \min\{B_0(\gamma),B_1(\gamma)\},
\end{align*}
where last inequality is by Theorem \ref{Theorem-rate-bounds-for-alpha-and-beta}.
 We now show
 that $\min\{B_0(\gamma),B_1(\gamma)\}>0$ for all $\gamma \in (\gamma_0,\gamma_1).$
 First, from the expression for $B_0(\gamma)$ in Theorem~\ref{Theorem-rate-bounds-for-alpha-and-beta},
  for $|\log r|>0$, we have:
  $B_0(\gamma_0)=N I_0(\gamma_0)=0$,
   and $B_0^\prime(\gamma) = N I_0^\prime(\gamma)>0$ for any
   $\gamma \in (\gamma_0,\gamma_0^{-})$.
   As the function $B_0(\cdot)$ is convex,
   we conclude that $B_0(\gamma)>0$, for all $\gamma>\gamma_0.$
  (The same conclusion holds under $|\log r|=0,$
   by replacing $N I_0(\gamma)$ with $I_0(\gamma)+|\log r| = I_0(\gamma).$)
    Analogously, it can be shown that $B_1(\gamma)>0$ for all $\gamma <\gamma_1$,
     and so $\min\{B_0(\gamma),B_1(\gamma)\}>0$, for all $\gamma \in (\gamma_0,\gamma_1).$

 We now calculate $\max_{\gamma \in (\gamma_0,\gamma_1)}
  \min\{B_0(\gamma),B_1(\gamma)\}$. Consider the function $\Delta_B(\gamma):=B_1(\gamma)-B_0(\gamma)$.
   Using the definition of $B_0(\gamma)$
    in Theorem \ref{Theorem-rate-bounds-for-alpha-and-beta}, and taking the subdifferential
     of $B_0(\gamma)$ at any point $\gamma \in (\gamma_0,\gamma_1)$,
     it is easy to show that $B_0^\prime(\gamma)>0$, for any
     subgradient $B_0^\prime(\gamma) \in \partial B_0(\gamma)$,
     which implies that $B_0(\cdot)$ is strictly
     increasing on $\gamma \in (\gamma_0,\gamma_1)$.
     Similarly, it can be shown that $B_1(\cdot)$
      is strictly decreasing on $\gamma \in (\gamma_0,\gamma_1)$. Further, using the properties
        that $I_0(\gamma_0)=0$ and $I_1(\gamma_1)=0$, we have
        $\Delta_B(\gamma_0) = B_1(\gamma_0)>0$, and
        $\Delta_B(\gamma_1) = -B_0(\gamma_1)<0$.
       By the previous two observations, we have that $\Delta_B(\gamma)$ is strictly decreasing on $\gamma \in (\gamma_0,\gamma_1)$, with $\Delta_B(\gamma_0)>0$ and $\Delta_B(\gamma_1)<0$. Thus,
       $\Delta_B(\cdot)$ has a unique zero $\gamma^\star$ in $\gamma \in (\gamma_0,\gamma_1)$.
       Now, the fact that
         $\max_{\gamma \in (\gamma_0,\gamma_1)}
  \min\{B_0(\gamma),B_1(\gamma)\} = B_0(\gamma^\star)=B_1(\gamma^\star)$
   holds trivially because $B_0(\cdot)$ is
   strictly increasing on $\gamma \in (\gamma_0,\gamma_1)$
    and $B_1(\cdot)$ is
   strictly decreasing on $\gamma \in (\gamma_0,\gamma_1)$. This
   completes the proof of part (a).

   We now prove part~(b).
   Suppose that $|\log r| \geq \mathrm{thr}(\Lambda_0,N).$
   We show that, for $\gamma=0$:
   \begin{eqnarray}
   \label{eqn-B0-equal_I0}
   B_0(0) = N I_0(0),\:\:\:\:\:
   B_1(0) = N I_1(0) = N I_0(0).
   \end{eqnarray}
   (Last equality in \eqref{eqn-B0-equal_I0} holds because
   $I_1(0)=(I_0(\gamma)-\gamma)|_{\gamma=0}=I_0(0)$.)
   Equations \eqref{eqn-B0-equal_I0}
    mean that $B_0(0)=B_1(0)$. Further, $0 \in (\gamma_0,\gamma_1)$,
    and, from part (a), $\gamma^\star$ is unique, and so
    $\gamma^\star$ has to be $0$. This shows
    that $\sup_{\gamma \in (\gamma_0,\gamma_1)}\min\{B_0(\gamma),B_1(\gamma)\}
    = N I_0(0) = N C_{\mathrm{ind}}$, and so, by part~(a):
    \begin{equation}
    \label{eqn-liminf-novo}
    \liminf_{k \rightarrow \infty} -\frac{1}{k} \log P_{\mathrm{e},i}(k,0) \geq N C_{\mathrm{ind}}.
    \end{equation}
On the other hand,
\begin{equation}
\label{eqn-limsup}
\limsup_{k \rightarrow \infty} -\frac{1}{k} \log P_{\mathrm{e,i}}(k,0) \leq N C_{\mathrm{ind}},
\end{equation}
because, by the Chernoff lemma~\cite{DemboZeitouni}, for \emph{any test} (with the corresponding
error probability $P_e^\prime(k,\gamma)$,) we have that $\limsup_{k \rightarrow \infty}-\frac{1}{k} \log P_e^\prime(k,\gamma) \leq N C_{\mathrm{ind}}$. Combining~\eqref{eqn-liminf-novo} and~\eqref{eqn-limsup} yields`
\[
N C_{\mathrm{ind}} \leq \liminf_{k \rightarrow \infty} -\frac{1}{k} \log P_{\mathrm{e,i}}(k,0)
\leq \limsup_{k \rightarrow \infty} -\frac{1}{k} \log P_{\mathrm{e,i}}(k,0) \leq N C_{\mathrm{ind}}.
\]
To complete the proof of part~(b),
it remains to show \eqref{eqn-B0-equal_I0}. We
prove only equality for $B_0$ as equality for $B_1$ follows similarly.
Because $|\log r| \geq \mathrm{thr}(\Lambda_0,N)$, we have,
from the definition of $\Phi_0(\cdot)$ in~\eqref{eqn-Phi-0-def}, that
$\Phi_0(\lambda^\bullet) = N \Lambda_0(\lambda^\bullet).$
Recall that $B_0(0) = -\Phi_0(\lambda^\star)$,
where $\lambda^\star$ is
a point for which~\eqref{eq-first-order-opt-condition} holds for $\gamma=0$.
However, because $\partial \Phi_0(\lambda^\bullet) = \{N \Lambda_0^\prime(\lambda^\bullet)\}$,
and $\Lambda_0^\prime(\lambda^\bullet)=0$, it follows that $\lambda^\star = \lambda^\bullet$
 and $B_0(0)=-\Phi_0(\lambda^\bullet) = - N \Lambda_0(\lambda^\bullet) = N I_0(0)$,
 which proves \eqref{eqn-B0-equal_I0}. Thus, the result in part (b) of the Lemma.
%

\end{proof}
\vspace{-4mm}
\section{Examples}
\label{sec-Examples}
This section illustrates our main results for several examples of the distributions of the sensor
observations. Subsection \ref{subsec-Gaussian-example} compares the
Gaussian and Laplace distributions, both with
a finite number of sensors $N$ and when $N \rightarrow \infty$. Subsection \ref{subsec-Discrete-example} considers discrete distributions with finite support, and, in more detail, binary distributions.
Finally, Subsection \ref{subsection-topology} numerically demonstrates that
our theoretical lower bound on the error exponent \eqref{eqn-rate} is tight. Subsection \ref{subsection-topology} also shows
trhough a symmetric, tractable example how distributed detection performance depends on the network topology (nodes' degree and
link occurrence/failure probability.)
%
%
%
\subsection{Gaussian distribution versus Laplace distribution}
\label{subsec-Gaussian-example}

\textbf{Gaussian distribution.} We now study the detection of a signal in additive Gaussian noise; $Y_i(t)$ has the following density:
\begin{equation*}
f_{\mathrm{G}}(y)=\,\left\{ \begin{array}{ll}
\frac{1}{\sqrt{2 \pi}\sigma_{\mathrm{G}}}e^{-\frac{\left(y-m_{\mathrm{G}}\right)^2}{2\sigma_{\mathrm G}^2}}, & H_1\\  \frac{1}{\sqrt{2 \pi}\sigma_{\mathrm{G}}}e^{-\frac{y^2}{2\sigma_{\mathrm G}^2}}, & H_0. \end{array} \right.
\end{equation*}
The LMGF is given by:
$\Lambda_{0,\mathrm G}(\lambda)=-\frac{\lambda(1-\lambda)}{2} \frac{m_{\mathrm{G}}^2}{\sigma_{\mathrm{G}}^2}.
$
The minimum of $\Lambda_{0,\mathrm G}$ is achieved at $\lambda^\bullet=\frac{1}{2}$, and the per sensor Chernoff information is $C_{\mathrm{ind},\mathrm{G}}=\frac{m_{\mathrm{G}}^2}{8 \sigma_{\mathrm{G}}^2}.$

Applying Corollary~\ref{Corollary-main-result}, we get the sufficient condition for optimality:
\begin{equation}
\label{eqn-threshold-gauss}
|\log r|\geq \Lambda_{0,\mathrm G}\left(\frac{N}{2}\right)-N\Lambda_{0,\mathrm G}\left(\frac{1}{2}\right)=N(N-1)C_{\mathrm{ind},\mathrm{G}}.
\end{equation}

Since $\Lambda_0(\lambda)=\Lambda_1(\lambda)$, the two conditions from the Corollary here reduce to a single condition in \eqref{eqn-rate}.

Now,  let the number of sensors $N \rightarrow \infty$, while keeping the total Chernoff information constant, i.e., not dependent on $N$; that is,
            $C_{\mathrm{G}}: = N C_{\mathrm{ind,G}} = \mathrm{const}$, $C_{\mathrm{ind,G}}(N) = C_{\mathrm{G}}/N.$  Intuitively, as $N$ increases, we deploy more and more sensors over a region (denser deployment),
             but, on the other hand, the sensors' quality becomes worse and worse.
             The increase of $N$ is balanced in such a way that
             the total information offered by all sensors stays constant with $N$.
             Our goal is to determine how the optimality threshold
             on the network connectivity $\mathrm{thr}(N,\Lambda_{0,\mathrm{G}})$ depends on $N$. We can
             see from~\eqref{eqn-threshold-gauss} that
               the optimality threshold for the distributed detector in the Gaussian case equals:
           \begin{equation}
           \label{eqn-thr-gaussian}
           \mathrm{thr}(\Lambda_{0,\mathrm{G}},N)=(N-1)C_{\mathrm{G}}.
           \end{equation}

\mypar{Laplace distribution} We next study the optimality conditions for the sensor observations with Laplace distribution. The density of $Y_i(t)$ is:
\begin{equation*}
f_{\mathrm{L}}(y)=\,\left\{ \begin{array}{lll}  \frac{1}{2 b_{\mathrm{L}}}e^{-\frac{|y-m_{\mathrm{L}}|}{b_{\mathrm L}}}, & H_1\\  \frac{1}{2 b_{\mathrm{L}}}e^{-\frac{|y|}{b_{\mathrm L}}}, & H_0. \end{array} \right.
\end{equation*}
The LMGF has the following form:
\begin{equation*}
\Lambda_{0,\mathrm L}(\lambda)=\log \left(\frac{1-\lambda}{1-2\lambda} e^{-\lambda  \frac{m_{\mathrm{L}}}{ b_{\mathrm{L}}}} -  \frac{\lambda}{1-2\lambda}e^{-\left(1-\lambda\right)  \frac{m_{\mathrm{L}}}{b_{\mathrm{L}}}}\right).
\end{equation*}
Again, the minimum is at $\lambda^\bullet=\frac{1}{2}$, and the per sensor Chernoff information is
\[ C_{\mathrm{ind},\mathrm{L}}=\frac{m_{\mathrm{L}}}{2 b_{\mathrm{L}}} - \log\left(1+\frac{m_{\mathrm{L}}}{2 b_{\mathrm{L}}}\right).\]
The optimality condition in~\eqref{eqn-rate} becomes:
\begin{eqnarray}
\label{eqn-thr-laplace}
|\log r| & \geq & \Lambda_{0,\mathrm{L}}\left( \frac{N}{2} \right) - N \Lambda_{0,\mathrm{L}}\left( \frac{1}{2}\right)
\\
&=& \log \left(
  \frac{2-N}{2-2N} e^{-\frac{N}{2}\frac{m_{\mathrm{L}}} {b_{\mathrm{L}}}} -  \frac{N}{2-2N}e^{-(1-\frac{N}{2})
\frac{ m_{\mathrm{L} }} {b_{\mathrm{L}}}}\right)-N \log \left(1+\frac{m_{\mathrm{L}}}{2b_{\mathrm{L}}} \right) + N\frac{m_{\mathrm{L}}}{2 b_{\mathrm{L}}}.
\nonumber
\end{eqnarray}

\mypar{Gaussian versus Laplace distribution} It is now interesting to compare the Gaussian and the Laplace
case under equal per sensor Chernoff information $C_{\mathrm{ind,L}} = C_{\mathrm{ind,G}}$.
 Figure~\ref{fig-2} (left)
  plots the LMGF for the Gaussian and Laplace distributions,
   for $N=10$, $C_{\mathrm{ind}} = C_{\mathrm{ind,L}} = C_{\mathrm{ind,G}} = 0.0945$,
    $b_{\mathrm{L}}=1$, $m_{\mathrm{L}}=1,$ and $m_{\mathrm{G}}^2/\sigma_{\mathrm{G}}^2 =0.7563 = 8 C_{\mathrm{ind}}.$
     By~\eqref{eqn-threshold}, the optimality
      threshold equals \[|N \Lambda_0(1/2)|+|\Lambda_0(N/2)|,\]
      as $\lambda^\bullet = 1/2$, for both
      the Gaussian and Laplace distributions.
      The threshold can be estimated from Figure~\ref{fig-2} (left): solid
      lines plot the functions $\Lambda_0(N \lambda)$ for
      the two different distributions, while
      dashed lines plot the functions $N\,\Lambda_0(\lambda)$.
      For both solid and dashed lines, the Gaussian
      distribution corresponds to the more curved functions. We see that the threshold is larger for the Gaussian case. This means that,
       for a certain range $r \in (r_{\mathrm{min}}, r_{\mathrm{max}})$,
        the distributed detector
        with Laplace sensors is asymptotically optimal,
         while with Gaussian sensors the distributed detector
         may not be optimal, even though it uses the same network infrastructure (equal $r$) and has equal per sensor Chernoff information. (See also Figure \ref{fig-2} (right) for another illustration of this effect.)

         We now compare the Gaussian and Laplace distributions when $N \rightarrow \infty$,
          and we keep the Gaussian total Chernoff
          information $C_{\mathrm{G}}$ constant with $N$. Let the Laplace distribution parameters vary with $N$ as:
                \begin{eqnarray*}
         m_{\mathrm{L}} = m_{\mathrm{L}}(N) = \frac{2 \sqrt{2 C_{\mathrm{G}}}}{\sqrt{N}},\:\:\:\:\:
         b_{\mathrm{L}} = b_{\mathrm{L}}(N)=1.
         \end{eqnarray*}
         We can show that, as $N \rightarrow \infty$,
          the total Chernoff information $C_{\mathrm{L}}(N) \rightarrow C_{\mathrm{G}}$ as
          $N \rightarrow \infty$, and so the Gaussian and the Laplace
          centralized detectors become equivalent.
          On the other hand,
          the threshold for the Gaussian distributed detector is given by~\eqref{eqn-thr-gaussian}
           while,
          for the Laplace detector, using \eqref{eqn-thr-laplace} and
          a Taylor expansion, we get that the optimality threshold is approximately:
          \[
          \mathrm{thr}(\Lambda_{0,\mathrm{L}},N) \approx \sqrt{2 C_{\mathrm{G}} N}.
          \]
          Hence, the required $|\log r|$ to achieve the optimal error exponent
           grows much slower with the Laplace distribution than with the Gaussian distribution.

\begin{figure}[thpb]
      \centering
      \includegraphics[height=2.1in,width=3.1 in]{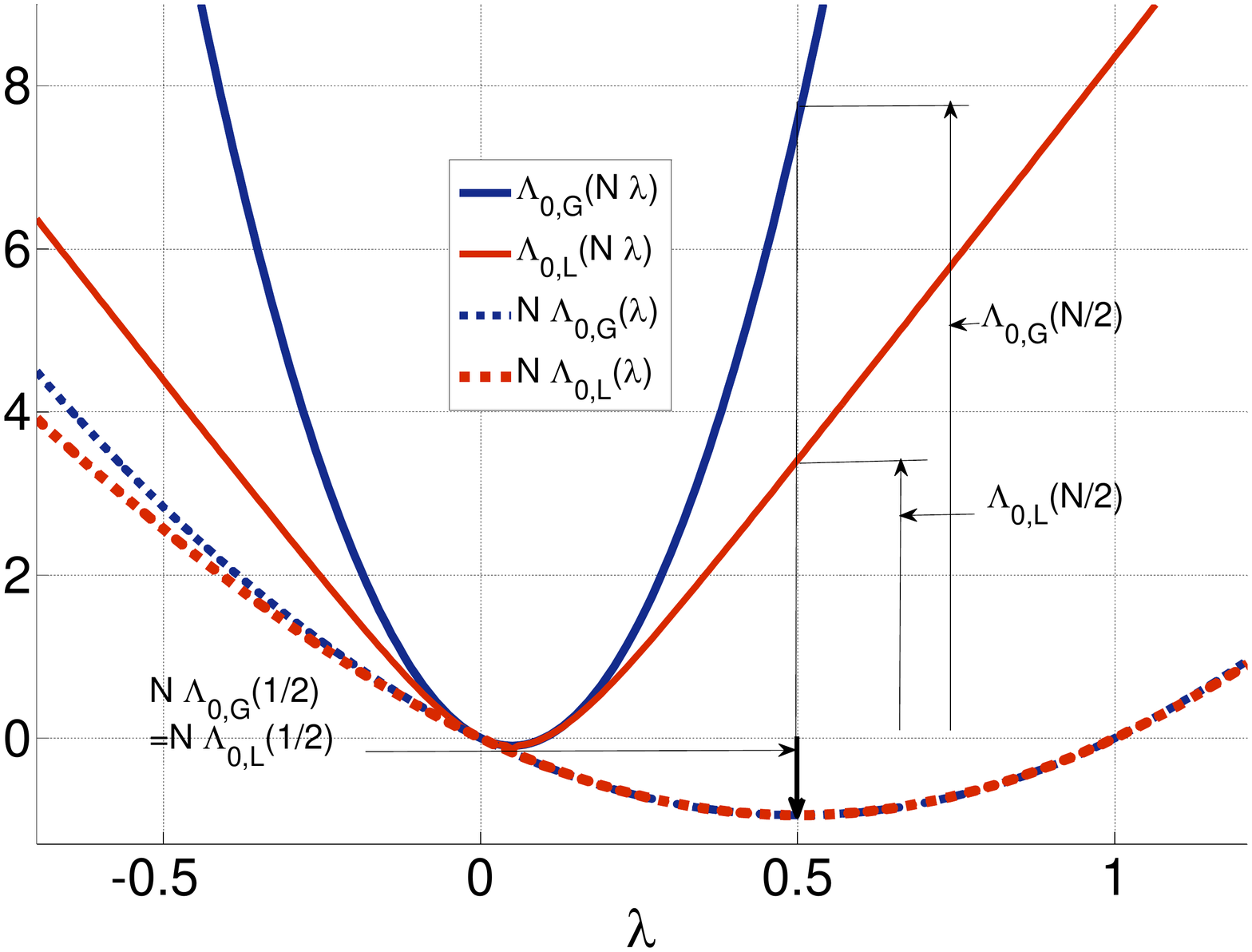}
      \includegraphics[height=2.1in,width=3.1 in]{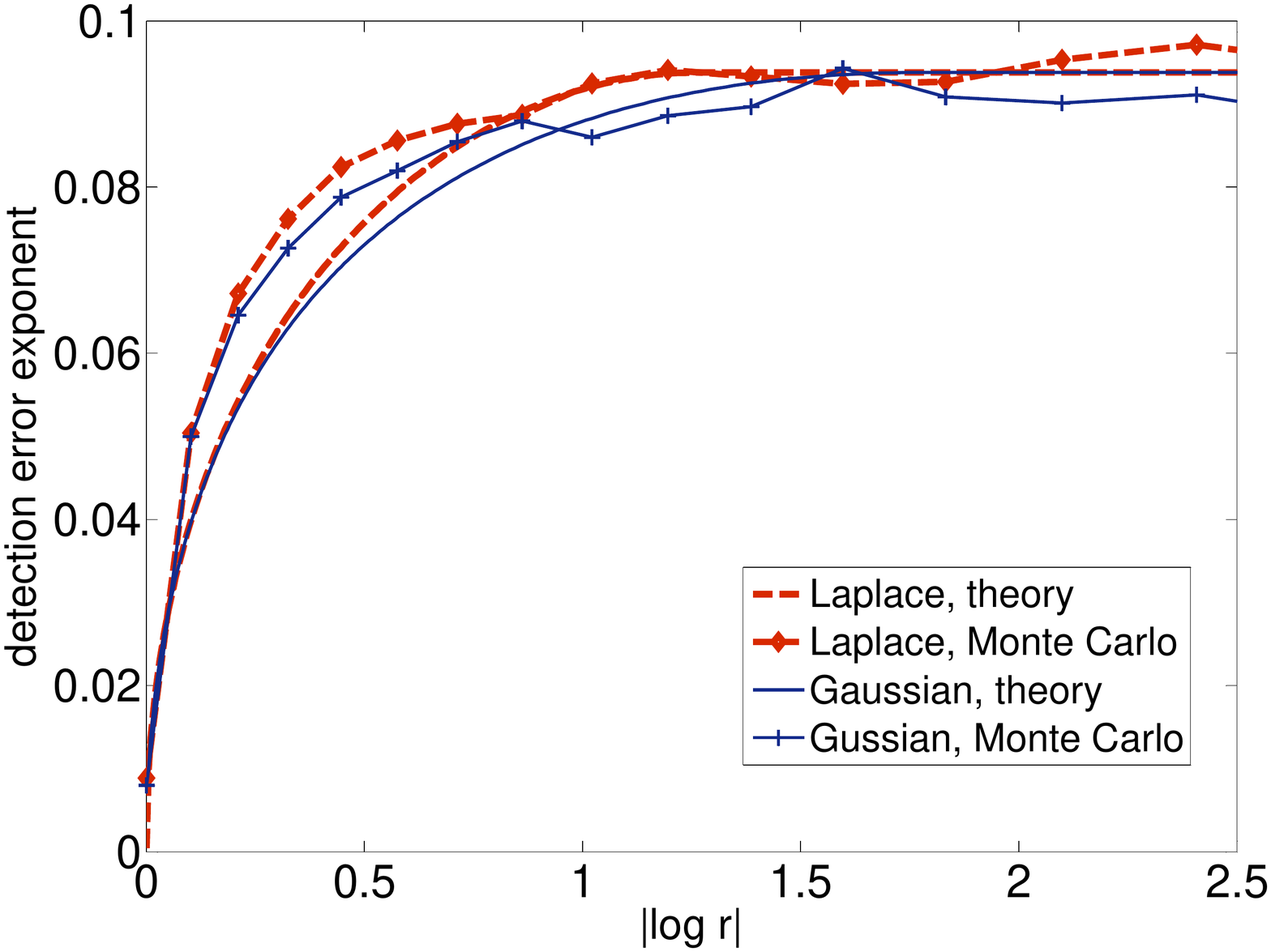}
      \caption{\textbf{Left:} LMGFs for Gaussian and Laplace distributions with equal per sensor Chernoff informations,
   for $N=10$, $C_{\mathrm{ind}} = C_{\mathrm{ind,L}} = C_{\mathrm{ind,G}} = 0.0945$,
    $b_{\mathrm{L}}=1$, $m_{\mathrm{L}}=1,$ and $m_{\mathrm{G}}^2/\sigma_{\mathrm{G}}^2 =0.7563 = 8 C_{\mathrm{ind}}. $
    Solid lines plot the functions $\Lambda_0(N\lambda)$ for the two distributions,
     while dashed lines plot the functions $N \Lambda_0(\lambda)$. For both solid and dashed lines, the Gaussian
      distribution corresponds to the more curved functions. The optimality threshold in~\eqref{eqn-threshold}
      is given by $|N \Lambda_0(1/2)|+|\Lambda_0(N/2)|$, as $\lambda^\bullet=1/2$. \textbf{Right:} Lower bound on the error exponent in~\eqref{eqn-rate} and Monte Carlo estimate of the error exponent versus $|\log r|$ for the Gaussian and Laplace sensor
      observations: $N=20$, $C_{\mathrm{ind}} = C_{\mathrm{ind,L}} = C_{\mathrm{ind,G}} = 0.005$,
    $b_{\mathrm{L}}=1$, $m_{\mathrm{L}}=0.2,$ and $m_{\mathrm{G}}^2/\sigma_{\mathrm{G}}^2 =
      0.04 = 8 C_{\mathrm{ind}}.$}
      \label{fig-2}
\end{figure}
\subsection{Discrete distributions}
\label{subsec-Discrete-example}
We now consider the case when the support of the sensor observations
under both hypothesis is a finite alphabet
$\{a_1,a_2,...,a_M\}$. This case is of  practical interest
 when, for example, the sensing device has
 an analog-to-digital converter with a finite range;
 hence, the observations take only a
 finite number of values. Specifically, the distribution of
 $Y_i(k)$, $\forall i$, $\forall k$, is given by:
 \begin{equation}
 \label{eqn-discrete-dis}
 \mathbb{P}(Y_i(k)=a_m) = \left\{ \begin{array}{ll}   q_m, & H_1\\  p_m, & H_0 \end{array} \right.,\:\:m=1,...,M.
 \end{equation}
Then, the LMGF under $H_0$ equals:
\[
\Lambda_0(\lambda) = \log \left( \sum_{m=1}^M q_m^\lambda p_m^{1-\lambda} \right).
\]
Note that $\Lambda_0(\lambda)$ is finite on ${\mathbb R}$.
 Due to concavity of $-\Lambda_0(\cdot)$, the argument of the Chernoff information $\lambda^\bullet$
 ($C_{\mathrm{ind}} = \max_{\lambda_{\in [0,1]}} \{-\Lambda_0(\lambda)\} = - \Lambda_0(\lambda^\bullet)$)
  can, in general, be efficiently computed numerically, for example, by the Netwon method (see, e.g.,~~\cite{BoydsBook},
   for details on the Newton method.) It can be shown, defining $c_m = \log\left( \frac{q_m}{p_m}\right)$, that the Newton direction, e.g.,~\cite{BoydsBook} equals:
   \begin{eqnarray*}
   d(\lambda) = -\left(\Lambda_0^{\prime \prime}(\lambda)\right)^{-1} \Lambda_0^{\prime}(\lambda)
   = -\frac{1}
   {\frac{\sum_{m=1}^M c_m^2 p_m e^{\lambda c_m} }{\sum_{m=1}^M c_m p_m e^{\lambda c_m} }-\frac{\sum_{m=1}^M c_m p_m e^{\lambda c_m} }
  { \sum_{m=1}^M c_m e^{\lambda c_m}}}. \nonumber
   \end{eqnarray*}

   \mypar{Binary observations} To gain more intuition
   and obtain analytical results, we consider~\eqref{eqn-discrete-dis} with $M=2$, i.e., binary sensors,
with $p_2=1-p_1 = 1-p$, $q_2=1-q_1=1-q$. Suppose further that $p<q.$
     We can show that the negative of the per sensor Chernoff information
     $\Lambda_{0,\mathrm{bin}}$ and the quantity $\lambda^\bullet$ are:
     \begin{eqnarray*}
     \label{eqn-chernoff-binary}
     -C_{\mathrm{ind}} &=& \Lambda_{0,\mathrm{bin}}(\lambda^\bullet)
     = \lambda^\bullet \log \left( \frac{q}{p} \right) + \log p
      + \log \left( 1 -  \frac{\log \frac{q}{p}}{\log\frac{1-q}{1-p}} \right)  \\
      \lambda^\bullet &=& \frac{ \log \frac{p}{1-p} + \log \left(\frac{\log \frac {p}{q}} {\log \frac{1-q}{1-p}}\right)   }
      {  \log \left( \frac{1-q}{1-p}\right) - \log \left(\frac{q}{p}\right)}.
     \end{eqnarray*}
Further, note that:
\begin{eqnarray}
\label{eqn-pom}
\Lambda_{0,\mathrm{bin}}(N \lambda^\bullet) &=& \log \left( p \left( \frac{q}{p}\right)^{N \lambda^\bullet} + (1-p) \left( \frac{1-q}{1-p}\right)^{N \lambda^\bullet}\right)
\leq \log \left( \frac{q}{p}\right)^{N \lambda^\bullet}
= N \lambda^\bullet \log\left( \frac{q}{p}\right). 
\end{eqnarray}
Also, we can show similarly that:
\begin{eqnarray}
\label{eqn-pom2}
\Lambda_{0,\mathrm{bin}}(1-N(1- \lambda^\bullet)) \leq N(1-\lambda^\bullet) \log \left( \frac{1-p}{1-q}\right).
\end{eqnarray}
Combining \eqref{eqn-pom} and \eqref{eqn-pom2}, and applying Corollary \ref{Corollary-main-result} (equation~\eqref{eqn-rate}),
 we get that a sufficient condition for asymptotic optimality is:
 \begin{eqnarray*}
 |\log r|
  \geq \max \left\{
 N \log\frac{1}{p}- N \log \left( 1+ \frac{\left| \log \frac{q}{p} \right|}{\left| \log\frac{1-q}{1-p}\right|} \right),
 N \log \frac{1}{1-q} - N \log \left( 1+ \frac{\left| \log\frac{1-q}{1-p}\right|} {\left| \log \frac{q}{p} \right|}\right)
 \right\}.
 \nonumber
 \end{eqnarray*}
We further assume a very simplified
sufficient condition for optimality:
\begin{equation}
\label{eqn-simplified}
|\log r| \geq N \max \left\{ \left| \log p \right|,  \left| \log (1-q) \right| \right\}.
\end{equation}
The expression in~\eqref{eqn-simplified} is intuitive.
Consider, for example, the case $p=1/2$, so that
the right hand side in~\eqref{eqn-simplified} simplifies to: $N |\log(1-q)|$.
 Let $q$ vary from $1/2$ to $1$.
 Then, as $q$ increases,
 the per sensor Chernoff information increases, and
 the optimal centralized detector has better and better performance
 (error exponent.) That is,
 the centralized detector has a very low error probability
  after a very short observation interval $k$. Hence, for larger $q$, the
  distributed detector needs more connectivity to be able to ``catch up'' with the
  performance of the centralized detector. We compare numerically
  Gaussian and binary distributed detectors with equal per sensor Chernoff information,
   for $N=32$ sensors, $C_{\mathrm{ind}} = 5.11 \cdot 10^{-4}$, $m_G^2/\sigma^2_G = 8 C_{\mathrm{ind}}$,
   $p=0.1$, and $q=0.12$. Binary detector requires more connectivity
   to achieve asymptotic optimality ($r \approx 0.25$), while
   Gaussian detector requires $r \approx 0.5.$

\subsection{Tightness of the error exponent lower bound in~\eqref{eqn-rate} and impact of the network topology}
\label{subsection-topology}
\mypar{Assessment of the tightness of the error exponent lower bound in \eqref{eqn-rate}} We note that the result in~\eqref{eqn-rate} is a theoretical lower bound on the error exponent. In particular, the condition $|\log r| \geq \mathrm{thr}(\Lambda_0,N)$ is proved to be a sufficient, but not necessary, condition
for asymptotically optimal detection; in other words,~\eqref{eqn-rate} does not exclude
the possibility of achieving asymptotic optimality for $|\log r|$ smaller than $\mathrm{thr}(\Lambda_0,N)$. In order
to assess the tightness of~\eqref{eqn-rate} (for both the Gaussian and Laplace distributions,) we perform Monte Carlo simulations to estimate the actual error exponent and compare it with~\eqref{eqn-rate}.
We consider $N=20$ sensors and fix the sensor observation distributions with the following parameters: $C_{\mathrm{ind}} = C_{\mathrm{ind,L}} = C_{\mathrm{ind,G}} = 0.005$,
    $b_{\mathrm{L}}=1$, $m_{\mathrm{L}}=0.2,$ and $m_{\mathrm{G}}^2/\sigma_{\mathrm{G}}^2 =0.04 = 8 C_{\mathrm{ind}}.$
We vary $r$ as follows. We construct a
(fixed) geometric graph with $N$ sensors by placing the nodes uniformly at random
on a unit square and connecting the nodes whose distance is less than a radius. Each link is a Bernoulli random variable, equal to $1$ with probability $p$ (link online),
and equal to $0$ with probability $1-p$ (link offline). The link occurrences are independent in time and space.
We change $r$ by varying $p$ from $0$ to $0.95$ in increments of $0.05$.
We adopt the Metropolis weights: whenever a link $\{i,j\}$ is online, we set $W_{ij}(k)=1/(1+\max(d_i(k),d_j(k)))$,
 where $d_i(k)$ is the number od neighbors of node $i$ at time $k$;
 when a link $\{i,j\}$ is offline, $W_{ij}(k)=0$; and $W_{ii}(k) = 1-\sum_{j \in O_i}W_{ij}(k)$,
 where we recall that $O_i$ is the neighborhood of node $i$. We obtain an estimate of the error probability $\widehat{P}_{\mathrm{e},i}(k)$ at sensor $i$ and time $k$ using 30,000 Monte Carlo runs of~\eqref{eqn-running-cons-sensor-i}
 per each hypothesis. We then estimate the sensor-wide average error exponent as:
 \[
 \frac{1}{N}\sum_{i=1}^N \frac{\log \widehat{P}_{\mathrm{e},i}(K_1) - \log \widehat{P}_{\mathrm{e},i}(K_2)}{K_2-K_1},
 \]
 with $K_1=40,$ $K_2=60.$ That is, we estimate the error exponent as the
 average slope (across sensors) of the error probability curve in a semi-log scale. Figure \ref{fig-2} (right) plots both the theoretical lower bound
 on the error exponent in~\eqref{eqn-rate} and the Monte Carlo estimate
 of the error exponent versus $|\log r|$ for Gaussian and Laplace distributions.
 We can see that the bound~\eqref{eqn-rate} is tight for both distributions.
 Hence, the actual distributed detection performance is very close to the performance
 predicted by~\eqref{eqn-rate}. (Of course, above the optimality threshold,~\eqref{eqn-rate} and
 the actual error exponent coincide and are equal to the total Chernoff information.)
 Also, we can see that the theoretical threshold on optimality $\mathrm{thr}(\Lambda_0,N)$
  and the threshold value computed from simulation are very close. Finally,
the distributed detector with Laplace observations achieves asymptotic optimality
  for a smaller value of $|\log r|$ ($|\log r| \approx 1.2$) than the distributed detector with Gaussian observations ($|\log r| \approx 1.6$),
  even though the corresponding centralized detectors are asymptotically equivalent.

\mypar{Impact of the network topology} We have seen in the previous two subsections
how detection performance depends on $r$. In order to understand how $r$ depends on
 the network topology, we consider a symmetric network structure, namely a regular network. For this case, we can express $r$
   as an explicit (closed form) function of the nodes' degrees and the link occurrence probabilities. (Recall that the smaller $r$ is, the better the network connectivity.)

Consider a connected regular network with $N$ nodes and degree $d \geq 2$.
Suppose that each link is a Bernoulli random variable, equal to $1$ with probability $p$ (link online)
and $0$ with probability $1-p$ (link offline,) with spatio-temporally independent link occurrences.
Then, it can be shown~\cite{rate-of-consensus} that $r$ equals:
\begin{equation}
\label{eqn-speed-cons-reg}
r = (1-p)^d.
\end{equation}
This expression is very intuitive. When $p$ increases,
i.e., when the links are online more often, the network (on average) becomes
more connected, and hence we expect that the network connectivity $|\log r|$ increases (improves). This is
confirmed by~\eqref{eqn-speed-cons-reg}: when $p$ increases, $r$ becomes smaller and closer to zero. Further, when $d$ increases, the network becomes more connected,
and hence the network speed again improves. Note also that
 $|\log r| = d |\log(1-p)|$ is a linear function of $d$.

We now recall Corollary~\ref{Corollary-main-result} to relate distributed detection performance
with $p$ and $d$. For example, for a fixed $p$, the distributed detection optimality condition becomes $d > \frac{\mathrm{thr}(\Lambda_0,N)}{|\log(1-p)|},$ i.e., distributed detection is asymptotically optimal when the sensors' degree is above a threshold. Further, because $d \leq N$, it follows that, for a large value of $\mathrm{thr}(\Lambda_0,N)$
and a small $p$, even networks with a very large degree (say, $d=N$) do not achieve asymptotic optimality.
Intuitively, a large $\mathrm{thr}(\Lambda_0,N)$ means that
the corresponding centralized detector decreases the error probability so fast in $k$ that,
because of the intermittent link failures, the distributed detector cannot ``catch up''
 with the centralized detector. Finally, when $p=1$,
 the optimality condition becomes $d > 0$, i.e.,
 distributed detection is asymptotically optimal for any $d \geq 2$.
 This is because, when $p=1$, the network is always connected,
 and the distributed detector asymptotically ``catches up''
 with the arbitrarily fast centralized detector. In fact,
 it can be shown that an arbitrarily connected network with no link failures achieves asymptotic
 optimality for any value of $\mathrm{thr}(\Lambda_0, N).$ (It can be shown
 that such a network has $r=0$, and, consequently, the network connectivity $|\log r|$ is $\infty$.)

%

%
%
%

\section{Non-identically distributed observations}
\label{section-extensions}
We extend Theorem~\ref{Theorem-rate-bounds-for-alpha-and-beta} and Corollary~\ref{Corollary-main-result} to the case of (independent) \textit{non-identically} distributed observations. First, we briefly explain the measurement model and define the relevant quantities.
   As before, let $Y_i(t)$ denote the observation of sensor $i$ at time $t$, $i=1,\ldots,N$, $t=1,2,\ldots$.

\emph{Assumption~A}
The observations of sensor $i$ are i.i.d. in time, with the following distribution:
\begin{equation*}
Y_i(t)\,\sim\,\left\{ \begin{array}{lr}   \nu_{i,1}, \;\;\;\; H_1 \\  \nu_{i,0}, \;\;\;\; H_0 \end{array} \right.,\:\:i=1,...,N,\:t=1,2,...
\end{equation*}
(Here we assume that $\nu_{i,1}$ and $\nu_{i,0}$ are mutually absolutely continuous, distinguishable measures, for $i=1,\ldots,N$).
Further, the observations of different sensors are independent both in time and in space, i.e., for $i\neq j$, $Y_i(t)$ and $Y_j(k)$ are independent for all $t$ and $k$.

 Under Assumption~A, the form of the log-likelihood ratio test remains the same as under Assumption~\ref{ass_1}:
 \begin{equation*}\label{eq-centralized-dec-rule-2}D(k):=\frac{1}{Nk}\sum_{t=1}^k \sum_{i=1}^N   L_i(t)  \stackrel[H_0]{H_1}{\gtrless}\gamma,\end{equation*}
where the log-likelihood ratio at sensor $i$, $i=1,...,N$, is now:
\[L_i(t)=\log
\frac{f_{i,1} (Y_i(t))}{f_{i,0}(Y_i(t))},\]
where $f_{i,l}$, $l=0,1,$ is the density (or the probability mass) function associated with $\nu_{i,l}$. We now discuss the choice of detector thresholds $\gamma$. Let $\overline{\gamma}_l=\mathbb E\left[ \frac{1}{N}\sum_{i=1}^N L_i(t)|H_l\right]=\left(\sum_{i=1}^N\gamma_{i,l}\right)/{N}$. We will show that,
if $|\log r|>0$, any $\gamma \in (\overline{\gamma}_0, \overline {\gamma}_1)$
yields an exponentially fast decay of the error probability, at any sensor.
The condition $|\log r|>0$ means that the network is connected on average, e.g.,~\cite{SoummyaConferenceConsensus}; if met,
 then, for all $i$, $\mathbb{E}[x_i(k)|H_l] \rightarrow \overline{\gamma}_l$
  as $k \rightarrow \infty$, $l=0,1.$ (Proof is omitted for brevity.)
  Clearly, under identical sensors, $\gamma_{i,l}=\gamma_{j,l}$ for any $i,j$,
   and hence the range of detector thresholds becomes the one
   assumed in Section~\ref{subsec-Detection-Alg}.

Denote by $\Lambda_{i,0}$ the LMGF of $L_i(t)$ under hypothesis $H_0$:
\begin{equation*}
\label{eq-log-momentgen-fcn-non-id} \Lambda_{i,0}: \mathbb{R} \longrightarrow \left(-\infty,+\infty\right],\;\;\; \Lambda_{i,0}(\lambda)=\log \E \left[e^{\lambda L_i(1)}| H_0\right].
\end{equation*}
We assume finiteness of the LMGF's of all sensors. Assumption~2
is restated explicitly as Assumption~{B}.

%

\emph{Assumption B}
\label{ass_finite_LMGF-non-id}
For $i=1,\ldots N$, $\Lambda_{i,0}(\lambda)<+\infty$, $\forall \lambda \in {\mathbb R}$.

The optimal centralized detector, with highest error exponent, is the likelihood ratio test with zero threshold $\gamma=0$~\cite{DemboZeitouni}, its error exponent is equal
to the Chernoff information of the vector of all sensors observations, and can be expressed
in terms of the LMGF's as:
\begin{eqnarray*}
C_{\mathrm{tot}}= \max_{\lambda \in [0,1]} - \sum_{i=1}^N \Lambda_{i,0}(\lambda) = -\sum_{i=1}^N \Lambda_{i,0}(\lambda^\bullet).
\end{eqnarray*}
Here, $\lambda^\bullet$ is the minimizer of $\sum_{i=1}^N \Lambda_{i,0}$ over $[0,1]$.
 We are now ready to state our results on the error exponent of the consensus+innovation detector for the case of non-identically distributed observations. (We continue to use
 $\alpha_i(k,\gamma)$, $\beta_i(k,\gamma)$, and $P_{\mathrm{e},i}(k,\gamma)$
  to denote the false alarm, miss, and Bayes error probabilities of distributed detector
  at sensor $i$.)
   \begin{theorem}
   \label{theorem-bounds-non-id-case}
Let Assumptions~A,~B and 3 hold, and let, in addition, $|\log r|>0.$ Consider the family of distributed detectors in~\eqref{eqn-running-cons-sensor-i} and~\eqref{eq-distributed-dec-rule} with thresholds $\gamma \in (\overline{\gamma}_0, \overline {\gamma}_1)$. Then,
at each sensor~$i$:
\begin{eqnarray}
\label{theorem-bound-on-alpha-non-id}
\liminf_{k\rightarrow \infty}-\frac{1}{k}\log \alpha_i(k,\gamma)
\geq B_0(\gamma)>0,\:\:
\liminf_{k\rightarrow \infty}-\frac{1}{k}\log \beta_i(k,\gamma)
\geq B_1(\gamma)>0,
\end{eqnarray}
where
\vspace{-4mm}
\begin{eqnarray}
\label{theorem-bound-on-alpha-non-id}
B_0(\gamma)
&=&\max_{ \phantom{-}\lambda \in [0,1]} N \lambda \gamma - \max \left\{\sum_{i=1}^N\Lambda_{i,0}(\lambda), \max_{i=1,\ldots,N} \Lambda_{i,0}(N \lambda)-|\log r|\right\}\\
\label{theorem-bound-on-beta-non-id}
B_1(\gamma)
&=& \max_{\lambda \in [-1,0]} N \lambda \gamma -\max \left\{\sum_{i=1}^N\Lambda_{i,1}(\lambda), \max_{i=1,\ldots,N} \Lambda_{i,1}(N \lambda)-|\log r|\right\}.
\end{eqnarray}
\end{theorem}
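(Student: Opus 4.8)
The plan is to mirror the proof of Theorem~\ref{Theorem-rate-bounds-for-alpha-and-beta} step by step, since everything that is \emph{network-driven} carries over verbatim: the monotonicity Lemma~\ref{lemma-new-aux}, the partition $\mathcal{P}_k=\{\mathcal{A}_s\}_{s=0}^{k}$ indexed by the time of last successful averaging, and the exponential bound $\mathbb{P}(\mathcal{A}_s)\le C(\delta)e^{-(k-s-1)(|\log r|-\delta)}$ from Lemma~\ref{lemma-conv-of-Phi} all depend only on the averaging matrices $W(t)$ (Assumption~3), not on the observation laws. First I would apply the exponential Markov inequality~\eqref{eq-Exp-Markov-inequality} to $\alpha_i(k,\gamma)$ with $\zeta=Nk\lambda$, $\lambda\ge 0$, expand $x_i(k)$ via~\eqref{alg-unwinded}, and condition on $W(1),\ldots,W(k)$. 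By the spatial independence in Assumption~A and the definition of $\Lambda_{i,0}$, iterating the expectation replaces the conditionally independent sum by $\E[\exp(\sum_{t=1}^{k}\sum_{j=1}^{N}\Lambda_{j,0}(N\lambda\,\Phi_{i,j}(k,t)))]$, exactly as in~\eqref{eq-Conditioning-on-Wl} but with the per-sensor LMGF $\Lambda_{j,0}$ attached to column $j$.

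The only genuinely new work is to re-derive the two bounds of Lemma~\ref{lemma-Bounds-on-Lambda} for heterogeneous LMGFs, and this is where the $\max_i$ and $\sum_i$ structure of~\eqref{theorem-bound-on-alpha-non-id} is born. For the averaging-failed regime ($t>s$), the vector $(\Phi_{i,1}(k,t),\ldots,\Phi_{i,N}(k,t))$ is a point of the probability simplex; since each $\Lambda_{j,0}$ is convex, $\sum_{j}\Lambda_{j,0}(N\lambda a_j)$ is convex in $a$ and attains its simplex maximum at a vertex $e_m$, where (using $\Lambda_{j,0}(0)=0$) it equals $\Lambda_{m,0}(N\lambda)$, so maximizing over vertices yields $\max_{m}\Lambda_{m,0}(N\lambda)$. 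For the averaging-succeeded regime ($t\le s$), the entries obey $|\Phi_{i,j}(k,t)-1/N|\le\sqrt{N}\epsilon$, so convexity bounds each term by $g_{j,0}(\epsilon,\lambda):=\max(\Lambda_{j,0}(\lambda-\epsilon N\sqrt{N}\lambda),\Lambda_{j,0}(\lambda+\epsilon N\sqrt{N}\lambda))$, and a term-wise application of Lemma~\ref{lemma-inf-f-epsilon} gives $\inf_{\epsilon>0}\sum_{j}g_{j,0}(\epsilon,\lambda)=\sum_{j}\Lambda_{j,0}(\lambda)$. Substituting these into the total-probability expansion~\eqref{eq-Condition-on-As}, factoring out $\mathbb{P}(\mathcal{A}_s)$, and running the same $\theta=s/k$ relaxation and geometric-sum argument as in~\eqref{eq-Main-bound2}, I would obtain, after letting $\epsilon\downarrow 0$ and $\delta\downarrow 0$ and maximizing over $\lambda\in[0,1]$, precisely $B_0(\gamma)$ in~\eqref{theorem-bound-on-alpha-non-id}. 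The claim for $\beta_i(k,\gamma)$ follows by the same computation with $\lambda\le 0$ and the dual identity $\Lambda_{i,1}(\lambda)=\Lambda_{i,0}(\lambda+1)$ (Lemma~\ref{lemma-log-momentgen-fcns}(c)), yielding $B_1(\gamma)$.

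Finally I would establish strict positivity, which is where the two standing hypotheses enter. Writing $\Phi_0(\lambda):=\max\{\sum_{j}\Lambda_{j,0}(\lambda),\ \max_{m}\Lambda_{m,0}(N\lambda)-|\log r|\}$, note $\Phi_0(0)=\max\{0,-|\log r|\}=0$ because $|\log r|>0$, so the objective $N\lambda\gamma-\Phi_0(\lambda)$ vanishes at $\lambda=0$. For small $\lambda>0$ the first branch dominates (the second stays near $-|\log r|<0$), and since $(\sum_{j}\Lambda_{j,0})'(0)=\sum_{j}\gamma_{j,0}=N\overline{\gamma}_0$, the right derivative of the objective equals $N(\gamma-\overline{\gamma}_0)>0$ for $\gamma>\overline{\gamma}_0$; hence the maximum over $\lambda\in[0,1]$ is strictly positive, giving $B_0(\gamma)>0$, and symmetrically $B_1(\gamma)>0$ for $\gamma<\overline{\gamma}_1$. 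The main obstacle is conceptual rather than technical: recognizing that heterogeneity forces the simplex maximum to be evaluated vertex-by-vertex, which turns the single term $\Lambda_0(N\lambda)$ of the i.i.d.\ proof into $\max_{i}\Lambda_{i,0}(N\lambda)$ while the consensus term aggregates additively into $\sum_{i}\Lambda_{i,0}(\lambda)$; once this dichotomy is identified, the remaining estimates are routine adaptations of the identically distributed argument.
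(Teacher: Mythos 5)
Your proposal is correct and follows essentially the same route as the paper: the paper's own proof explicitly states that it mimics the proof of Theorem~\ref{Theorem-rate-bounds-for-alpha-and-beta}, with the only new ingredient being the heterogeneous counterpart of Lemma~\ref{lemma-Bounds-on-Lambda} — the simplex-vertex bound producing $\max_{j}\Lambda_{j,0}(N\lambda)$ for the averaging-failed regime and the term-wise $\epsilon$-perturbation bound producing $\sum_{j}\Lambda_{j,0}(\lambda)$ for the averaging-succeeded regime — which is exactly the dichotomy you identify. Your added derivative-at-zero argument for strict positivity (using $|\log r|>0$ and $\gamma>\overline{\gamma}_0$) is a correct filling-in of a step the paper leaves implicit.
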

\begin{corollary}
\label{Corollary-main-result-non-id}
Let Assumptions A, B and 3 hold, and let, in addition, $|\log r|>0.$ Consider the family of distributed detectors in~\eqref{eqn-running-cons-sensor-i} and~\eqref{eq-distributed-dec-rule} with thresholds $\gamma \in (\overline{\gamma}_0, \overline {\gamma}_1)$.
Then:
\begin{enumerate}[(a)]
\item At each sensor $i$:
\begin{eqnarray}
\label{eqn-rate-generic}
\liminf_{k \rightarrow \infty} -\frac{1}{k} \log P_{\mathrm e,i}(k,\gamma) \geq \min\{B_0(\gamma),B_1(\gamma)\}>0,
\end{eqnarray}
and the lower bound in \eqref{eqn-rate-generic} is maximized for the point $\gamma^\star \in (\overline{\gamma}_0,\overline{\gamma}_1)$ at which $B_0(\gamma^\star)=B_1(\gamma^\star).$
%
%
%
%
\item Consider $\lambda^\bullet = \mathrm{arg\,min}_{\lambda \in [0,1]} \,\sum_{i=1}^N \Lambda_{i,0}(\lambda)$,
and let:
\begin{eqnarray}
\label{eqn-threshold-new}
\mathrm{thr}\left(\Lambda_{1,0},\ldots,\Lambda_{N,0}\right) =
\:\:\:\:\:\:\:\:\:\:
\:\:\:\:\:\:\:\:\:\:
\:\:\:\:\:\:\:\:\:\:
\:\:\:\:\:\:\:\:\:\:
\:\:\:\:\:\:\:\:\:\:
\:\:\:\:\:\:\:\:\:\:
\:\:\:\:\:\:\:\:\:\:
\:\:\:\:\:\:\:\:\:\:
\:\:\:\:\:\:\:\:\:\:
\:\:\:\:\:\:\:\:\:\:
\:\:\:\:\:\:\:\:\:\:
\:\:\:\:\:\:\:\:\:\:
 \\
\max \left\{\max_{i=1,\ldots,N} \Lambda_{i,0}(N \lambda^\bullet)-\sum_{i=1}^N \Lambda_{i,0}(\lambda^\bullet),
\max_{i=1,\ldots,N} \Lambda_{i,0}(1-N(1- \lambda^\bullet))-\sum_{i=1}^N \Lambda_{i,0}(\lambda^\bullet)
\right\}. \nonumber
\end{eqnarray}
Then, when $|\log r|\geq \mathrm{thr}\left(\Lambda_{1,0},\ldots,\Lambda_{N,0}\right) $, each sensor $i$ with the detector threshold set to $\gamma=0$, is asymptotically optimal:
 \[
 \lim_{k \rightarrow \infty}-\frac{1}{k} \log P_{\mathrm e,i}(k,0) = C_{\mathrm{tot}}.
 \]
%
%
%
\end{enumerate}
\end{corollary}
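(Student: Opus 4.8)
The plan is to replay, essentially verbatim, the proof of Corollary~\ref{Corollary-main-result}, since Theorem~\ref{theorem-bounds-non-id-case} plays for the non-identical case exactly the role that Theorem~\ref{Theorem-rate-bounds-for-alpha-and-beta} played in the i.i.d.\ case. The only substantive change is bookkeeping: everywhere the i.i.d.\ proof used the two ingredients $N\Lambda_0(\lambda)$ and $\Lambda_0(N\lambda)$, I now use $\sum_{i=1}^N \Lambda_{i,0}(\lambda)$ and $\max_{i}\Lambda_{i,0}(N\lambda)$, respectively. Concretely, I would introduce
\[
\Phi_0(\lambda):=\max\left\{\sum_{i=1}^N \Lambda_{i,0}(\lambda),\ \max_{i=1,\ldots,N}\Lambda_{i,0}(N\lambda)-|\log r|\right\},
\]
and its $H_1$ counterpart $\Phi_1$ (with $\Lambda_{i,1}$ in place of $\Lambda_{i,0}$ and $\lambda\le 0$), so that $B_0(\gamma)=\max_{\lambda\ge 0} N\gamma\lambda-\Phi_0(\lambda)$ and $B_1(\gamma)=\max_{\lambda\le 0} N\gamma\lambda-\Phi_1(\lambda)$. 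Every generic property exploited in the i.i.d.\ proof survives: each $\Lambda_{i,l}$ is convex with $\Lambda_{i,l}(0)=0$, so $\Phi_0$ and $\Phi_1$ are convex (pointwise maxima of convex functions), and $B_0,B_1$ are their Fenchel-type conjugates restricted to $\lambda\ge 0$ and $\lambda\le 0$.

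For part (a) I would first invoke Lemma~1.2.15 of~\cite{DemboZeitouni} to write $\liminf_k -\frac1k\log P_{\mathrm e,i}=\min\{\liminf_k-\frac1k\log\alpha_i,\ \liminf_k-\frac1k\log\beta_i\}$, then lower bound each summand by $B_0(\gamma)$ and $B_1(\gamma)$ via Theorem~\ref{theorem-bounds-non-id-case}. Strict positivity and the existence and uniqueness of $\gamma^\star$ follow from the same monotonicity argument as before. Since $\overline{\gamma}_0=\frac1N(\sum_i\Lambda_{i,0})^\prime(0)$ and $|\log r|>0$, one has $\Phi_0(0)=0$ and the conjugate maximizer $\lambda^\star(\gamma)$ equals $0$ at $\gamma=\overline{\gamma}_0$; for $\gamma>\overline{\gamma}_0$ the first-order condition $N\gamma\in\partial\Phi_0(\lambda^\star)$ forces $\lambda^\star(\gamma)>0$, so $B_0^\prime(\gamma)=N\lambda^\star(\gamma)>0$ and $B_0$ is strictly increasing with $B_0(\overline{\gamma}_0)=0$, hence $B_0>0$ on $(\overline{\gamma}_0,\overline{\gamma}_1)$. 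Symmetrically $B_1$ is strictly decreasing with $B_1(\overline{\gamma}_1)=0$, so $B_1>0$. Therefore $\Delta_B:=B_1-B_0$ is strictly decreasing with $\Delta_B(\overline{\gamma}_0)>0$ and $\Delta_B(\overline{\gamma}_1)<0$, yielding the unique zero $\gamma^\star$ at which $\max_\gamma\min\{B_0,B_1\}$ is attained.

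For part (b) I would show that the threshold condition forces $\lambda^\bullet$ (resp.\ $\lambda^\bullet-1$) to be the minimizer of $\Phi_0$ (resp.\ $\Phi_1$), so the exponent decouples into the centralized one. The first term of $\mathrm{thr}(\Lambda_{1,0},\ldots,\Lambda_{N,0})$ is exactly $|\log r|\ge \max_i\Lambda_{i,0}(N\lambda^\bullet)-\sum_i\Lambda_{i,0}(\lambda^\bullet)$, which makes the summation term dominate in $\Phi_0$ at $\lambda^\bullet$; since $0\in\partial(\sum_i\Lambda_{i,0})(\lambda^\bullet)$, this gives $0\in\partial\Phi_0(\lambda^\bullet)$ and thus $B_0(0)=-\Phi_0(\lambda^\bullet)=-\sum_i\Lambda_{i,0}(\lambda^\bullet)=C_{\mathrm{tot}}$. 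Using $\Lambda_{i,1}(\lambda)=\Lambda_{i,0}(\lambda+1)$, the minimizer of $\sum_i\Lambda_{i,1}$ sits at $\lambda^\bullet-1$, and via $\max_i\Lambda_{i,1}(N(\lambda^\bullet-1))=\max_i\Lambda_{i,0}(1-N(1-\lambda^\bullet))$ the second term of $\mathrm{thr}$ becomes the analogous dominance condition, giving $B_1(0)=C_{\mathrm{tot}}$. Thus $B_0(0)=B_1(0)=C_{\mathrm{tot}}$; by the uniqueness from part (a), $\gamma^\star=0$ and $\sup_\gamma\min\{B_0,B_1\}=C_{\mathrm{tot}}$, so $\liminf_k-\frac1k\log P_{\mathrm e,i}(k,0)\ge C_{\mathrm{tot}}$. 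The matching bound $\limsup_k-\frac1k\log P_{\mathrm e,i}(k,0)\le C_{\mathrm{tot}}$ holds for any test by the Chernoff lemma~\cite{DemboZeitouni}, and combining the two gives equality.

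The main obstacle I anticipate is the non-smoothness introduced by the $\max_{i}\Lambda_{i,0}(N\lambda)$ term, which replaces the single smooth, strictly convex function $\Lambda_0(N\lambda)$ of the i.i.d.\ case. This term can contribute extra kinks, at points where the maximizing index changes, so the clean three-region description of $\partial\Phi_0$ from the proof of Theorem~\ref{Theorem-rate-bounds-for-alpha-and-beta} no longer holds verbatim. The resolution is that the corollary never needs the explicit piecewise formula for $B_0$: part (a) uses only convexity together with strict positivity of the conjugate maximizer $\lambda^\star(\gamma)$, and part (b) only needs to locate the minimizer of $\Phi_0$ at $\lambda^\bullet$, which is controlled entirely by the dominance of the summation term guaranteed by $\mathrm{thr}(\Lambda_{1,0},\ldots,\Lambda_{N,0})$. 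Hence the non-smoothness is bypassed rather than resolved.
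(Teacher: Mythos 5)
Your proposal is correct and follows essentially the route the paper intends: the paper gives no separate proof of Corollary~\ref{Corollary-main-result-non-id}, leaving it as the direct analogue of the proof of Corollary~\ref{Corollary-main-result} with $N\Lambda_0(\lambda)$ and $\Lambda_0(N\lambda)$ replaced by $\sum_i\Lambda_{i,0}(\lambda)$ and $\max_i\Lambda_{i,0}(N\lambda)$, which is exactly what you do. Your explicit handling of the two points the analogy does not cover verbatim --- establishing $B_0(\overline{\gamma}_0)=0$ via the conjugate maximizer rather than the explicit piecewise formula, and noting that part (b) only needs $0\in\partial\Phi_0(\lambda^\bullet)$ so the kinks from the $\max_i$ term are harmless --- is sound and fills in what the paper leaves implicit.
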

Comparing Theorem \ref{Theorem-rate-bounds-for-alpha-and-beta} with Theorem \ref{theorem-bounds-non-id-case}, we can see that, under non-identically distributed observations, it is no longer possible to analytically characterize the lower bounds on the error exponents, $B_0(\gamma)$ and $B_1(\gamma)$. However, the objective functions (in the variable $\lambda$) in~\eqref{theorem-bound-on-alpha-non-id} and~\eqref{theorem-bound-on-beta-non-id} are concave (by convexity of the LMGF's) and the underlying optimization variable $\lambda$ is a scalar, and, thus, $B_0(\gamma)$ and $B_1(\gamma)$ can be efficiently found by a one dimensional numerical optimization procedure, e.g., a subgradient algorithm~\cite{Urruty}.

\begin{proof}[Proof of Theorem~\ref{theorem-bounds-non-id-case}] The proof of Theorem~\ref{theorem-bounds-non-id-case} mimics the proof of Theorem~\ref{Theorem-rate-bounds-for-alpha-and-beta}; we focus only on the steps that account for
different sensors' LMGF's. First, expression~\eqref{eq-Conditioning-on-Wl} that upper bounds the probability of false alarm $\alpha_i(k,\gamma)$ for the case of non-identically distributed observations becomes:
\begin{eqnarray}
\label{eq-Conditioning-on-Wl-non-id}
\E\left[e^{N\lambda  \sum_{t=1}^k\sum_{j=1}^N \Phi_{i,j}(k,t) L_j(t)}|H_0\right]&=&\E\left[\E\left[e^{N\lambda  \sum_{t=1}^k\sum_{j=1}^N \Phi_{i,j}(k,t) L_j(t)}|H_0,W(1),\ldots,W(k)\right]\right]\nonumber \\
&=& \E\left[e^{\sum_{t=1}^k\sum_{j=1}^N \Lambda_{j,0}\left(N\lambda \Phi_{i,j}(k,t)\right)}\right]. \nonumber
\end{eqnarray}
Next, we bound the sum in the exponent of the previous equation, conditioned on the event $\mathcal{A}_s$, for a fixed $s$ in $\{0,1\ldots,k\}$, deriving a counterpart to Lemma~\ref{lemma-Bounds-on-Lambda}.
\begin{lemma} Let Assumptions A, B, and 3 hold. Then,
\begin{enumerate}[(a)]
\item For any realization of $W(t)$, $t=1,2,...,k$:
  \[\sum_{j=1}^N\Lambda_{j,0}\left(N\lambda \Phi_{i,j}(k,t)\right)\leq  \max_{j=1,\ldots,N} \Lambda_{j,0}\left(N\lambda\right),\: \forall t=1, \ldots, k.\]
 \item Consider a fixed $s$ in $\{ 0, 1,...,k\}$. If the event
$\mathcal{A}_s$ occurred,
then, for $i = 1,...,N$:
\[\sum_{j=1}^N \Lambda_{j,0}\left(N\lambda \Phi_{i,j}(k,t)\right) \leq  \sum_{j=1}^N \max\left( \Lambda_{j,0}\left(\lambda-\epsilon N \sqrt {N}\lambda\right), \Lambda_{j,0}\left(\lambda+\epsilon N \sqrt {N}\lambda\right) \right),\: \forall t=1, \ldots, s.\]
\end{enumerate}
\end{lemma}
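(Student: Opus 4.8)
The plan is to follow the proof of Lemma~\ref{lemma-Bounds-on-Lambda} step for step, modifying only the places where the identical-distribution assumption was used to collapse the $N$ summands into a single $\Lambda_0$. The argument there rested on two generic properties: convexity of $\Lambda_0$ (Lemma~\ref{lemma-log-momentgen-fcns}, part~(a)) and $\Lambda_0(0)=0$ (part~(b)). Both carry over to every sensor, since each $\Lambda_{j,0}$ is convex and $\Lambda_{j,0}(0)=\log\E[e^{0}\,|\,H_0]=0$, so the same two facts drive the heterogeneous argument.

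For part~(a), I would first observe that the map $a \mapsto \sum_{j=1}^N \Lambda_{j,0}(N\lambda\,a_j)$ is convex on $\mathbb{R}^N$, being a sum of convex functions each composed with an affine map of $a$. Hence its maximum over the probability simplex $\{a\in\mathbb{R}^N:\sum_j a_j=1,\,a_j\ge 0\}$ is attained at a vertex $e_m$. Evaluating at $e_m$ gives $\Lambda_{m,0}(N\lambda)+\sum_{j\ne m}\Lambda_{j,0}(0)=\Lambda_{m,0}(N\lambda)$, so the maximum over all vertices equals $\max_{m=1,\ldots,N}\Lambda_{m,0}(N\lambda)$. Since for any realization of $W(1),\ldots,W(k)$ the row $\{\Phi_{i,j}(k,t):j=1,\ldots,N\}$ lies in this simplex (as $\Phi(k,t)$ is stochastic), the claimed bound follows. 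The only change from the i.i.d. case is that the max over $j$ can no longer be dropped.

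For part~(b), I would condition on $\mathcal{A}_s$ and reuse the chain of inequalities from Lemma~\ref{lemma-Bounds-on-Lambda}~(b): double stochasticity and sub-multiplicativity of the spectral norm give $\|\Phi(k,t)-J\|\le\epsilon$ for every $t\le s$, and the equivalence of the $1$-norm and the spectral norm then yields $|\Phi_{i,j}(k,t)-1/N|\le\sqrt{N}\epsilon$ for all $i,j$. Consequently $N\lambda\,\Phi_{i,j}(k,t)$ lies in the interval $[\lambda-\epsilon N\sqrt{N}\lambda,\,\lambda+\epsilon N\sqrt{N}\lambda]$, and convexity of each $\Lambda_{j,0}$ forces its value there to be bounded by the larger of the two endpoint values. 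Summing this per-coordinate bound over $j=1,\ldots,N$ produces exactly the right-hand side of part~(b).

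This argument is essentially a routine transcription, so I do not expect a genuine obstacle; the only point requiring care is in part~(a), namely correctly evaluating the objective at each vertex of the simplex and recognizing that $\Lambda_{j,0}(0)=0$ annihilates all but one term, so that the vertex values are the individual $\Lambda_{m,0}(N\lambda)$ and their maximum is $\max_m\Lambda_{m,0}(N\lambda)$ rather than a single common $\Lambda_0(N\lambda)$. I would also note for completeness that the subsequent steps of the main proof (the partition bound, the substitution $\theta=s/k$, and the passage $\epsilon,\delta\to 0$) go through unchanged with $N\Lambda_0(\lambda)$ replaced by $\sum_j\Lambda_{j,0}(\lambda)$ and $\Lambda_0(N\lambda)$ by $\max_j\Lambda_{j,0}(N\lambda)$, which is what yields the bounds $B_0,B_1$ of Theorem~\ref{theorem-bounds-non-id-case}.
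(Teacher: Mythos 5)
Your proof is correct and follows exactly the route the paper intends: the paper states this lemma without a separate proof, presenting it as the direct adaptation of Lemma~\ref{lemma-Bounds-on-Lambda} in which the vertex evaluation over the simplex produces $\max_{m}\Lambda_{m,0}(N\lambda)$ (part (a)) and the per-coordinate endpoint bound is summed rather than multiplied by $N$ (part (b)). Both of your modifications, and your observation that only convexity and $\Lambda_{j,0}(0)=0$ are needed, match the paper's argument.
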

The remainder of the proof proceeds analogously to the proof of Theorem~\ref{Theorem-rate-bounds-for-alpha-and-beta}.
\end{proof}
 \section{Conclusion}
   \label{section-conclusion}
   We analyzed the large deviations performance (error exponent) of consensus+innovations distributed detection
   over random networks. The sensors' observations have generic (non-Gaussian) distribution, independent, not necessarily
   identical over space, and i.i.d. in time. Our results hold assuming that the log-moment
   generating functions of each sensor's log-likelihood ratio are finite. We showed that the distributed detector
   exhibits a phase transition behavior with respect to the network connectivity, measured by $|\log r|$, where $r$ is the (exponential) rate of convergence in probability of the product $W(k)W(k-1) \cdots W(1)$ to
 the consensus matrix $J:=(1/N)1 1^\top$. When $|\log r|$ is
   above the threshold, the distributed detector
   has the same error exponent as the optimal centralized detector. We further showed
   that the optimality threshold depends on the type of the distribution of the sensor observations.
   Numerical and analytical studies illustrated this dependence for Gaussian, Laplace, and binary distributions
   of the sensors' observations.
\appendix
\section{Appendix}
\label{section-appendix}
\subsection{Proof of finiteness of the log-moment generating function under \eqref{eqn-polynomial}-\eqref{eqn-logarithmic-2}}
We now show that Assumption~\ref{ass_finite_LMGF} holds, i.e.,
that $\Lambda_0(\cdot)$ is finite for any $\lambda \in \mathbb R$, if~\eqref{eqn-polynomial} and~\eqref{eqn-polynomial-2} hold.
 The other combinations for finiteness of $\Lambda_0(\cdot)$ when 1) either~\eqref{eqn-polynomial}~or~\eqref{eqn-logarithmic};
  \textit{and} 2) either~\eqref{eqn-polynomial-2} or~\eqref{eqn-logarithmic-2} hold
   can  be shown similarly, and, hence, for brevity, we do not consider these cases.
 Assume $m>0$ (the case $m<0$ can be treated analogously), fix $\lambda \in {\mathbb R}$ and consider:
 \begin{eqnarray}
 \label{eqn-here}
 \Lambda_0(\lambda) = \log \int_{y=-\infty}^{+\infty} e^{\lambda \, \log \frac{f_n(y-m)}{f_n(y)}} f_n(y) d y,
 \end{eqnarray}
 where we use the fact that the density under $H_1$ is
 $f_1(y)=f_n(y-m)$, i.e., $f_1(\cdot)$ is the shifted density $f_n(\cdot)$ (of the noise) under $H_0$.
 With $f_n(y)=c e^{-g(y)}$, \eqref{eqn-here}
  is rewritten as:
  \begin{eqnarray*}
  e^{\Lambda_0(\lambda)} &=& c\, \int_{y=-\infty}^{+\infty}
   e^{\lambda \left[ -g(y-m)+g(y)  \right]} e^{-g(y)} dy
   =
   c\, \int_{y=-\infty}^{+\infty} e^{-g(y) \left[ 1 - \lambda \left( 1-\frac{g(y-m)}{g(y)} \right) \right] }dy\\
   &=& c\, \int_{y=-\infty}^{0} e^{-g(y) \left[ 1 - \lambda \left( 1-\frac{g(y-m)}{g(y)} \right) \right]} dy
   + c\, \int_{y=0}^{+\infty} e^{-g(y) \left[ 1 - \lambda \left( 1-\frac{g(y-m)}{g(y)} \right) \right] }dy. \nonumber
  \end{eqnarray*}
Now, by~\eqref{eqn-polynomial}, for any $\epsilon_1 \in (0,\infty)$,
there exists $M_1 \in (0,\infty)$, so that
\begin{equation*}
\left( (\rho_{+}) - \epsilon_1 \right) y^{\tau_{+}} \leq g(y) \leq  \left( (\rho_{+}) + \epsilon_1 \right) y^{\tau_{+}},\:\: \forall y \geq M_1.
\end{equation*}
Further, we have that:
\begin{equation}
\label{eqn-ova}
\left( (\rho_{+}) - \epsilon_1 \right) (y-m)^{\tau_{+}} \leq g(y-m) \leq  \left( (\rho_{+}) + \epsilon_1 \right) (y-m)^{\tau_{+}},\:\: \forall y \geq M_1+m.
\end{equation}
Also, for any $\epsilon_2 \in (0,\infty)$, there exists $M_2 \in (0,\infty)$,
such that:
\begin{equation}
\label{eqn-ona}
(1-\epsilon_2) (y-m)^{\tau_{+}} \leq y^{\tau_{+}} \leq  (1+\epsilon_2) (y-m)^{\tau_{+}},\:\: \forall y \geq M_2.
\end{equation}
Now, combining \eqref{eqn-ova} and \eqref{eqn-ona}, we obtain:
\begin{equation}
\label{eqn-ova2}
(1-\epsilon_2) \frac{(\rho_{+})-\epsilon_1}{(\rho_{+})+\epsilon_1} \leq \frac{g(y-m)}{g(y)} \leq (1+\epsilon_2) \frac{(\rho_{+})+\epsilon_1}{(\rho_{+})-\epsilon_1},\:\:\forall y \geq M_3:=\max\{M_1+m,M_2\}.
\end{equation}
To upper bound the integral
$\int_{y=0}^{+\infty} e^{-g(y) \left[ 1 - \lambda \left( 1-\frac{g(y-m)}{g(y)} \right) \right] }dy$,
 we note that, by \eqref{eqn-ova2}, we can choose $M_3$ large enough, so that:
 $
 \left|1-\frac{g(y-m)}{g(y)}\right| \leq \frac{\epsilon_3}{|\lambda|}, \:\:\ \forall y \geq M_3,
$
 for arbitrary $\epsilon_3\in(0,1)$.
 Thus, we have:
 \begin{eqnarray*}
 &\,& \int_{y=0}^{+\infty} e^{-g(y) \left[ 1 - \lambda \left( 1-\frac{g(y-m)}{g(y)} \right) \right] }dy
  =  \int_{y=0}^{M_3} e^{-g(y) \left[ 1 - \lambda \left( 1-\frac{g(y-m)}{g(y)} \right) \right] } dy
  +
  \int_{y=M_3}^{+\infty} e^{-g(y) \left[ 1 - \lambda \left( 1-\frac{g(y-m)}{g(y)} \right) \right] } dy\\
  &\leq&
  M_4 + \int_{y=M_3}^{+\infty} e^{-g(y) \left[ 1 - |\lambda| \frac{\epsilon_3}{|\lambda|} \right] }d y 
  \leq M_4 + \int_{y=M_3}^{+\infty} e^{{-(1-\epsilon_3)}g(y) } dy \leq M_4 + M_5 < \infty.
 \end{eqnarray*}
 Finiteness of the integral $\int_{y=-\infty}^{0} e^{-g(y) \left[ 1 - \lambda \left( 1-\frac{g(y-m)}{g(y)} \right) \right] }dy$,
  using equation \eqref{eqn-polynomial-2}, can be proved in an analogous way. As $\lambda \in \mathbb R$
   is arbitrary, we conclude that $\Lambda_0(\lambda) < +\infty$, $\forall \lambda \in \mathbb R$.
\bibliographystyle{IEEEtran}
\bibliography{IEEEabrv,Bibliography}
\end{document}